\pgfplotsset{compat=1.13}
\newtheoremstyle{subdefistyle}
{0pt} % Space above
{0pt} % Space below
{} % Body font
{} % Indent amount
{\bfseries} % Theorem head font
{)} % Punctuation after theorem head
{.5em} % Space after theorem head
{} % Theorem head spec (can be left empty, meaning `normal')
\newtheoremstyle{sublemmastyle}
{0pt} % Space above
{0pt} % Space below
{\it} % Body font
{} % Indent amount
{\bfseries} % Theorem head font
{)} % Punctuation after theorem head
{.5em} % Space after theorem head
{} % Theorem head spec (can be left empty, meaning `normal')
\theoremstyle{definition}
\newtheorem{defi}{Definition}
\newtheorem{problem}[defi]{Problem}
\theoremstyle{plain}
\newtheorem{theorem}[defi]{Theorem}
\newtheorem{lemma}[]{Lemma}
\newtheorem{cor}[defi]{Corollary}
\DeclareMathOperator*{\argmin}{argmin}
\theoremstyle{subdefistyle}
\theoremstyle{sublemmastyle}
\newcommand{\sdots}{..}
\newcommand{\llex}{<_{\mathsf{lex}}}
\newcommand{\kmers}{\mathcal{K}}
\newcommand{\sufftrie}[1][]{RT_{#1}}
\newcommand{\wlinks}{WL}
\newcommand{\LCol}{\mathsf{L}}
\newcommand{\BWT}{\mathsf{L}}
\newcommand{\LF}{\mathsf{LF}}
\newcommand{\getintervals}{\mathsf{getIntervals}}
\newcommand{\C}{\mathsf{C}}
\newcommand{\rank}{\mathsf{rank}}
\newcommand{\select}{\mathsf{select}}
\newcommand{\dout}{D_{\mathsf{out}}}
\newcommand{\din}{D_{\mathsf{in}}}
\newlength{\figurewidth}
\newlength{\smallfigurewidth}
\title
{
\large
\textbf{Edge minimization in de Bruijn graphs}
}
\author{%
Uwe Baier$^{\ast}$, Thomas B\"{u}chler$^{\ast}$, Enno Ohlebusch$^{\ast}$, Pascal Weber$^{\ast}$\\[0.5em]
{\small\begin{minipage}{\linewidth}\begin{center}
$^{\ast}$Institute of Theoretical Computer Science \\
Ulm University, D-89069 Ulm, Germany \\
\url{{uwe.baier,thomas.buechler,enno.ohlebusch,pascal-1.weber}@uni-ulm.de}
\end{center}\end{minipage}}
}
\begin{document}

\maketitle
\thispagestyle{empty}

\begin{abstract}
This paper introduces the de Bruijn graph edge minimization problem, which is 
related to the compression of de Bruijn graphs:
find the order-$k$ de Bruijn graph with minimum edge count among all orders.
We describe an efficient algorithm that solves this problem.
Since the edge minimization problem is connected to the BWT compression 
technique called ``tunneling'', the paper also describes a way to minimize 
the length of a tunneled BWT in such a way that useful properties for
sequence analysis are preserved. Although being a restriction,
this is significant progress towards a solution to the open problem
of finding optimal disjoint blocks that minimize space, as stated in Alanko et al.\ (DCC 2019).
\end{abstract}

%%%% INTRODUCTION %%%%%%%%%%%%%%%%%%%%%%%%%%%%%%%%%%%%%%%%%%%%%%%%%%%%%%%%%%%%%

\Section{Introduction}

De Bruijn graphs play an important role in string processing. Originally
invented for solving combinatorial problems \cite{DEBRU:1946}, de
Bruijn graphs have also been used in bioinformatics, e.g.\ in genome assembly \cite{IDU:WAT:1995} or to 
describe variations between different strings \cite{IQB:CAC:TUR:FLI:MCV:2012}. As de Bruijn graphs 
that show variations between genomes can get very large, there have been attempts to
compress such graphs by merging nodes \cite{MAR:LEE:SCH:2014}.

In this paper, we will present a similar compression scheme for such graphs with the
difference that we view the de Bruijn graph as a multigraph in which every
overlapping of two consecutive $k$-mers in the underlying cyclic string induces an edge in the graph. Then, instead
of merging nodes, we fuse the edges between two nodes $x$ and $y$ if $y$ is the only successor
of $x$ and $x$ is the only predecessor of $y$. We also introduce the de Bruijn graph edge
minimization problem, asking for the order $k$ such that an edge-reduced de Bruijn graph of an
underlying string has the minimum amount of edges under all possible orders of the graph.

The main contribution of our paper is the presentation of an algorithm which, given a
special trie, is capable of solving the de Bruijn graph edge minimization
problem with a worst-case time that is linear on the minimum amount of edges. The algorithm is
designed in such a way that it can be implemented with an FM-index \cite{BUR:WHE:1994,FER:MAN:2005}
in an overall run-time of $O(n \log \sigma)$, where $\sigma$ is the alphabet size.

We also show a direct connection between edge reduction in de Bruijn graphs and a recent BWT
compression technique called tunneling \cite{BAI:2018}. It will be shown that each edge-reduced de Bruijn
graph corresponds to a tunneled BWT of the underlying string, where the number of edges in the
graph is identical to the length of the tunneled BWT.  Therefore, we show that solving the edge minimization
problem provides significant progress towards a solution to the open problem of finding the optimal disjoint blocks
that minimize space, as stated in \cite{ALA:GAG:NAV:BEN:2019}.

Finally, the paper describes how the outputs of the edge minimization algorithm can be used to
produce the associated tunneled BWT. We provide a publicly available implementation of
our algorithms \cite{edgemin-impl}
and present results for the edge reduction ratio and tunneled BWT size ratio for real world
data coming from the Pizza \& Chili text corpus \cite{pizzachili-corpus}
 and the Repetitive corpus \cite{repetitive-corpus}.

%%%% PRELIMINARIES %%%%%%%%%%%%%%%%%%%%%%%%%%%%%%%%%%%%%%%%%%%%%%%%%%%%%%%%%%%%

\Section{Preliminaries}
We start with some standard string notation. Throughout this paper, any interval
$[i,j]$ is meant to be an interval over the natural numbers and indices
start with $1$, except when stated differently.

Let $\Sigma$ be an ordered alphabet of size $\sigma = |\Sigma|$.
A string $S$ of length $n=|S|$ over $\Sigma$ is a finite sequence of $n$ 
characters from $\Sigma$. Throughout this paper, we assume that $S$ is
null-terminated, i.e.\ that it ends with the EOF-symbol $\$$ 
(which is the smallest symbol in $\Sigma$).
For $i,j \in [1,n]$, $S[i]$ denotes the $i$-th character of $S$ and
$S[i\sdots j]$ denotes the substring of $S$ starting at the $i$-th and ending 
at the $j$-th position. A length $k$ substring of $S$ is called a $k$-mer.
The empty string with length $0$ is denoted by $\varepsilon$. 
Let $S$ and $T$ be two strings of length $n$ and $m$ over a
common alphabet $\Sigma$. We write $S \llex T$ if $S$ is lexicographically 
smaller than $T$, i.e.\ $S$ is a proper prefix of $T$ or there exists a
$k \in [1,\min\{n,m\}]$ with $S[1\sdots k-1] = T[1\sdots k-1]$ and 
$S[k] < T[k]$.

Next, the de Bruijn graph (DBG) shall be defined. 
Note that our definition differs from
common definitions as \cite{COM:PEV:TES:2011} or \cite{BAI:BEL:OHL:2016} because we consider
de Bruijn graphs originating from cyclic strings.

\begin{defi} \label{def:debruijngraph}
Let $S$ be a string of length $n$ and $k \in [1,n]$. If we concatenate
the $k$-mer prefix of $S$ to $S$ itself, then we obtain the string
$Z_k(S) \coloneqq S[1\sdots n]S[1\sdots k]$.
%, which we call the $k$-cyclic string of $S$. 
Furthermore, the set of all $k$-mers of $Z_k(S)$ is
\[
	\kmers \coloneqq \{ ~ Z_k(S)[i\sdots i+k-1] ~ \mid ~ i \in [1,n] ~ \}\text{.}
\]
The de Bruijn graph $G_k(S) = (\kmers, E)$ of order $k$ is a directed multigraph, where the multiset of edges is defined as (the superscript $m$ denotes the multiplicity of an edge):
\[
	E \coloneqq \{ ~ (x[1\sdots k],x[2\sdots k+1])^m ~ \mid
	               ~ x \in \Sigma^{k+1} \text{ occurs exactly } m \text{ times in } Z_k(S) \}\text{.}
\]
\end{defi}

An example of a DBG can be found in Figure \ref{fig:dbg-st}. It should be noted that,
independent of the order $k$, each DBG $G_k(S)$ contains exactly $|S|$ edges due to the definition of
the string $Z_k(S)$. The next definition introduces the rotation-trie
$\sufftrie(S)$ built from all rotations (cyclic shifts) of $S$.

\begin{defi} \label{def:sufftrie}
Let $S$ be a string of length $n$. The rotation-trie $\sufftrie(S)$ is a trie \cite{FRE:1960} 
built from the rotations $S[1\sdots n] , S[2\sdots n]S[1] , \ldots , S[n]S[1\sdots n-1]$
of $S$ such that the following holds for any two nodes $\phi$ and $\varphi$ of the trie: If $\phi$ is the left sibling of $\varphi$,
then its label%
\footnote{The label of a node $\phi$ in a trie emerges by concatenating all edge labels in the path
	from the root node to node $\phi$.}
$l(\phi)$ is lexicographically smaller than the label of $\varphi$, i.e.\ $l(\phi) \llex l(\varphi)$.

We furthermore denote by $lb(\phi)$ resp.\ $rb(\phi)$ the lexicographic rank of the leftmost resp.\
rightmost leaf of $\phi$ among all leaves of $\sufftrie(S)$, and by $|\phi|$ the number of leaves of the subtree rooted at $\phi$,
i.e.\ $|\phi| = rb(\phi) - lb(\phi) + 1$. With $\sufftrie[k](S)$ we denote the set of 
nodes at level $k$ in $\sufftrie(S)$. $\sufftrie[\leq k](S)$ denotes the nodes in the levels $1$ to $k$.
 
\end{defi}

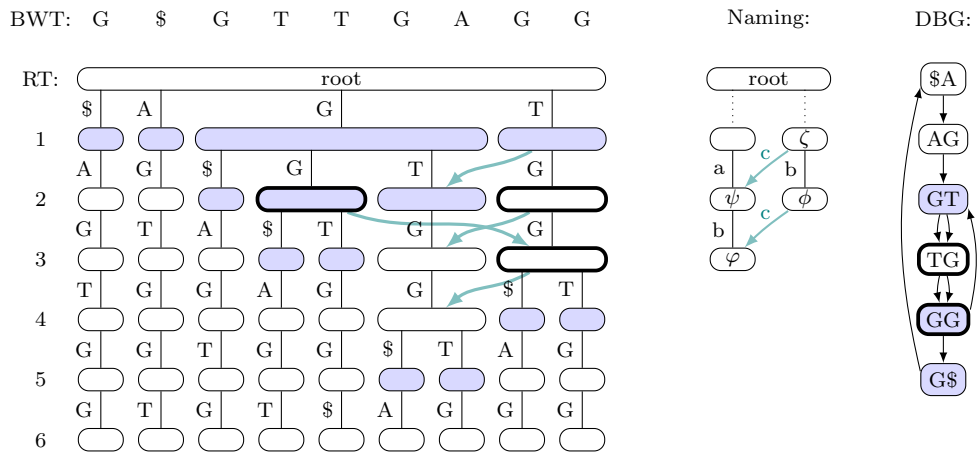
\begin{figure}[t]
\centering
\tikzset{>=latex}
\begin{tikzpicture}[scale=0.8,
	one/.style={minimum width=2em},
	two/.style={minimum width=4.8em},
	five/.style={minimum width=13em},
	six/.style={minimum width=15.5em},
	stnode/.style={draw, rounded corners, minimum height=1em, one, fill=blue!15},
	dbgnode/.style={draw, rounded corners, minimum height=1em, one, fill=blue!15},
	namingnode/.style={draw, rounded corners, minimum height=1em, one, fill=blue!0},
	white/.style={fill=white},
	f/.style={line width=0.5mm},
	wlink/.style={->,very thick,teal!50}]
\begin{scriptsize}
    \def \anchor{0}; %(+12)
    \def \anchordbg{15}; %(+3)
    \def \anchornaming{11.5};%(+4)
    %%%%%%%%%%%%%%%%%%%%%%%%%%%%%%%%%%%%%%%
    % Rotations Trie    
    %%%%%%%%%%%%%%%%%%%%%%%%%%%%%%%%%%%%%%%

    %labels
    \node at (\anchor,7) {BWT:};
    \node at (\anchor,6) {RT:};
    \node at (\anchor,5) {1};
    \node at (\anchor,4) {2};
    \node at (\anchor,3) {3};	
    \node at (\anchor,2) {4};
    \node at (\anchor,1) {5};
    \node at (\anchor,0) {6};	

    %BWT
    \node  at (\anchor+1,7) (){G};
    \node  at (\anchor+2,7) (){\$};
    \node  at (\anchor+3,7) (){G};
    \node  at (\anchor+4,7) (){T};
    \node  at (\anchor+5,7) (){T};
    \node  at (\anchor+6,7) (){G};
    \node  at (\anchor+7,7) (){A};
    \node  at (\anchor+8,7) (){G};
    \node  at (\anchor+9,7) (){G};

    \node [stnode,minimum width=23.5em,white] at (\anchor+5,6) (0) {};
    \node [] at (\anchor+5,6)  {root};
    
    \node [stnode] 		at (\anchor+1,5) 	(!) {};
    \node [stnode] 		at (\anchor+2,5) 	(A) {};
    \node [stnode,five] at (\anchor+5,5)	(G) {};
    \node [stnode,two] 	at (\anchor+8.5,5)	(T) {};
    
    \node [stnode,white] 	at (\anchor+1,4) 	(!A){};
    \node [stnode,white] 			at (\anchor+2,4) 	(AG){};
    \node [stnode] 			at (\anchor+3,4) 	(G!){};
    \node [stnode,two,f] 	at (\anchor+4.5,4)	(GG){};
    \node [stnode,two] 		at (\anchor+6.5,4) 	(GT){};
    \node [stnode,two,white,f] at (\anchor+8.5,4)	(TG){};  
    
    \node [stnode,white] 	at (\anchor+1,3) (!AG){};
    \node [stnode,white] 	at (\anchor+2,3) (AGT){};
    \node [stnode,white] 	at (\anchor+3,3) (G!A){};
    \node [stnode] 			at (\anchor+4,3) (GG!){};
    \node [stnode] 			at (\anchor+5,3) (GGT){};
    \node [stnode,two,white] 	at (\anchor+6.5,3) 	(GTG){};
    \node [stnode,two,white,f] 	at (\anchor+8.5,3) (TGG){}; 
      
    \node [stnode,white] at (\anchor+1,2) (!AGT){};
    \node [stnode,white] at (\anchor+2,2) (AGTG){};
    \node [stnode,white] at (\anchor+3,2) (G!AG){};
    \node [stnode,white] at (\anchor+4,2) (GG!A){};
    \node [stnode,white] at (\anchor+5,2) (GGTG){};
    \node [stnode,two,white] 	at (\anchor+6.5,2) (GTGG){};
    \node [stnode] 				at (\anchor+8,2) (TGG!){};
    \node [stnode] 				at (\anchor+9,2) (TGGT){};

	\node [stnode,white] at (\anchor+1,1) (!AGTG){};
    \node [stnode,white] at (\anchor+2,1) (AGTGG){};
    \node [stnode,white] at (\anchor+3,1) (G!AGT){};
    \node [stnode,white] at (\anchor+4,1) (GG!AG){};
    \node [stnode,white] at (\anchor+5,1) (GGTGG){};
    \node [stnode] at (\anchor+6,1) (GTGG!){};
    \node [stnode] at (\anchor+7,1) (GTGGT){};
    \node [stnode,white] at (\anchor+8,1) (TGG!A){};
    \node [stnode,white] at (\anchor+9,1) (TGGTG){};

	\node [stnode,white] at (\anchor+1,0) (!AGTGG){};
    \node [stnode,white] at (\anchor+2,0) (AGTGGT){};
    \node [stnode,white] at (\anchor+3,0) (G!AGTG){};
    \node [stnode,white] at (\anchor+4,0) (GG!AGT){};
    \node [stnode,white] at (\anchor+5,0) (GGTGG!){};
    \node [stnode,white] at (\anchor+6,0) (GTGG!A){};
    \node [stnode,white] at (\anchor+7,0) (GTGGTG){};
    \node [stnode,white] at (\anchor+8,0) (TGG!AG){};
    \node [stnode,white] at (\anchor+9,0) (TGGTGG){};

    \draw (!)+(0,0.82) -- node[left] {\$} (!.north);
    \draw (A)+(0,0.82) -- node[left] {A} (A.north);
    \draw (G)+(0,0.82) -- node[left] {G} (G.north);
    \draw (T)+(0,0.82) -- node[left] {T} (T.north);
    
    \draw (!) -- node[left] {A} (!A.north);
    \draw (A) -- node[left] {G} (AG.north);
    \draw (G!)+(0,0.82) -- node[left] {\$} (G!.north);
    \draw (GG)+(0,0.82) -- node[left] {G} (GG.north);
    \draw (GT)+(0,0.82) -- node[left] {T} (GT.north);
    \draw (T) -- node[left] {G} (TG.north);
    
    \draw (!A) -- node[left] {G} (!AG.north);
    \draw (AG) -- node[left] {T} (AGT.north);
    \draw (G!) -- node[left] {A} (G!A.north);
    \draw (GG!)+(0,0.82) -- node[left] {\$} (GG!.north);
    \draw (GGT)+(0,0.82) -- node[left] {T} (GGT.north);
    \draw (GT) -- node[left] {G} (GTG.north);
    \draw (TG) -- node[left] {G} (TGG.north);
    
    \draw (!AG) -- node[left] {T} (!AGT.north);
    \draw (AGT) -- node[left] {G} (AGTG.north);
    \draw (G!A) -- node[left] {G} (G!AG.north);
    \draw (GG!) -- node[left] {A} (GG!A.north);
    \draw (GGT) -- node[left] {G} (GGTG.north);
    \draw (GTG) -- node[left] {G} (GTGG.north);
    \draw (TGG!)+(0,0.82) -- node[left] {\$} (TGG!.north);
    \draw (TGGT)+(0,0.82) -- node[left] {T} (TGGT.north);
    
    \draw (!AGT) -- node[left] {G} (!AGTG.north);
    \draw (AGTG) -- node[left] {G} (AGTGG.north);
    \draw (G!AG) -- node[left] {T} (G!AGT.north);
    \draw (GG!A) -- node[left] {G} (GG!AG.north);
    \draw (GGTG) -- node[left] {G} (GGTGG.north);
    \draw (GTGG!)+(0,0.82) -- node[left] {\$} (GTGG!.north);
    \draw (GTGGT)+(0,0.82) -- node[left] {T} (GTGGT.north);
    \draw (TGG!)-- node[left] {A} (TGG!A.north);
    \draw (TGGT)-- node[left] {G} (TGGTG.north);
    
    \draw (!AGTG) -- node[left] {G} (!AGTGG.north);
    \draw (AGTGG) -- node[left] {T} (AGTGGT.north);
    \draw (G!AGT) -- node[left] {G} (G!AGTG.north);
    \draw (GG!AG) -- node[left] {T} (GG!AGT.north);
    \draw (GGTGG) -- node[left] {\$} (GGTGG!.north);
    \draw (GTGG!) -- node[left] {A} (GTGG!A.north);
    \draw (GTGGT) -- node[left] {G} (GTGGTG.north);
    \draw (TGG!A) -- node[left] {G} (TGG!AG.north);
    \draw (TGGTG) -- node[left] {G} (TGGTGG.north);
    
	%weiner links
	%\begin{scope}[on background layer]
	%\draw[wlink] (GT)  to[bend right=5,out=5,in=192] (1GT);
	%\draw[wlink] (GT)  to[bend right=5,out=5,in=192] (AGT);
	%\draw[wlink] (TG)  to[bend right=5,out=5,in=192] (GTG);
	%\end{scope}
	\begin{scope}[on background layer]
	\draw[wlink] (GG)  to[out=-20,in=150] (TGG);
	\draw[wlink] (T)  to[out=-150,in=40] (GT);
	\draw[wlink] (TG)  to[out=-150,in=40] (GTG);
	\draw[wlink] (TGG)  to[out=-150,in=40] (GTGG);
	%\draw[wlink] (G)  to[out=-20,in=150] (TG);
	%\draw[wlink] (G)  to[out=-20,in=50] (GG);
	
	\end{scope}
    
    %%%%%%%%%%%%%%%%%%%%%%%%%%%%%%%%%%%%%%%
    % deBruijn graph
    %%%%%%%%%%%%%%%%%%%%%%%%%%%%%%%%%%%%%%%

	\node at (\anchordbg,7) {DBG:};

	\node [dbgnode,white] 	at (\anchordbg,6) 	(!A) {\$A};	
	\node [dbgnode,white] 		at (\anchordbg,5)	(AG) {AG};
	\node [dbgnode] 		at (\anchordbg,4) 	(GT) {GT};
	\node [dbgnode,f,white] at (\anchordbg,3) 	(TG) {TG};
	\node [dbgnode,f]     	at (\anchordbg,2) 	(GG) {GG};
	\node [dbgnode] 		at (\anchordbg,1)   (G!) {G\$};
	
	\path[->]
	(!A) edge (AG)
	(AG) edge (GT) 
	(GG) edge (G!)
	[bend left=13]
	(GT) edge (TG)
	(TG) edge (GG)
	(G!.160) edge (!A.200)
	[bend left=-13]
	(GG.20) edge (GT.-20)
	(GT) edge (TG)
	(TG) edge (GG);

	%%%%%%%%%%%%%%%%%%%%%%%%%%%%%%%%%%%%%%%
    % Naming schema
    %%%%%%%%%%%%%%%%%%%%%%%%%%%%%%%%%%%%%%%

	\node at (\anchornaming+.6,7) {Naming:};

    \node [namingnode,minimum width= 5.5em] at (\anchornaming+.6,6) (0) {};  
    \node  at (\anchornaming+.6,6)  {root};  
    
    \node [namingnode] at (\anchornaming,5)		(1){};
    \node [namingnode] at (\anchornaming+1.2,5)	(2){};  
    
    \node [namingnode] at (\anchornaming,4) 	(3){};    
    \node [namingnode] at (\anchornaming+1.2,4) (4){};
    
    \node [namingnode] at (\anchornaming,3) 	(5){};    
    %\node [namingnode] at (\anchornaming+1.5,3) (6){};    
    
    %\node [namingnode] at (\anchornaming,2) 	(7){};    
    
    %\node[right of=2,node distance=.5cm] () {$\theta$};
    \node at (2) () {$\zeta$};
    \node at (4) () {$\phi$};
    
    \node at (3) () {$\psi$};
    \node at (5) () {$\varphi$};

    \draw [dotted] (1)+(0,.82) -- node[left] {} (1);
    \draw [dotted] (2)+(0,.82) -- node[left] {} (2);
    
    \draw (1) -- node[left] {a} (3);
    \draw (2) -- node[left] {b} (4);   
     
    \draw (3) -- node[left] {b} (5);
    %\draw (4) -- node[left] {c} (6); 
        
    %\draw (5) -- node[left] {c} (7);
    
    \draw[wlink,thick] (2)  to[out=-145,in=45] node[above,very thick,teal] {c} (3);
    \draw[wlink,thick] (4)  to[out=-145,in=45] node[above,very thick,teal] {c} (5);
    %\draw[wlink,thick] (6)  to[out=-145,in=45] (7);

\end{scriptsize}
\end{tikzpicture}
\caption{
Left: First 6 levels of the rotation-trie $\sufftrie(S)$ of the string $S=$ AGTGGTGG\$.
Nodes with siblings are filled. Teal-colored arrows represent unique Weiner links
(with the connected nodes having size $>1$). The nodes of which incoming edges can be 
fused in the DBG are drawn bold-framed.
Middle: Greek letters for nodes are used as depicted here.
Right: Order-$2$ de Bruijn graph $G_2(S)$. 	
}
\label{fig:dbg-st}
\end{figure}

Figure \ref{fig:dbg-st} shows an example of a rotation-trie. Note that all the
rotations of a null-terminated string are distinct, which implies that
$\sufftrie(S)$ has $n=|S|$ leaves.
To see the connection between RTs and DBGs we need the definition of so-called 
Weiner links \cite{WEI:1973}.

\begin{defi} \label{def:wlinks}
Let $\sufftrie(S)$ be a rotation-trie and let $\phi$ be a node of 
$\sufftrie(S)$. We call the set
\begin{align*}
\wlinks  \coloneqq & \{ ~ (\phi,\varphi,c) ~ \mid ~ \varphi \text{ is a node of } \sufftrie(S)
	                         \text{ such that for some } c \in \Sigma \\
                   &\phantom{ \{ ~ (\phi,\varphi,c) ~ \mid ~ } \, l(\varphi) = c l(\phi) \text{ or }
		|l(\phi)| = |S| \text{ and } l(\varphi) = cl(\phi)[1\sdots|S|-1] ~ \} 
\end{align*}
the set of Weiner links of node $\phi$. A Weiner link $(\phi,\varphi,c) \in \wlinks$ is called unique
if there exists no other Weiner link $(\phi,\psi,d) \in \wlinks$ with $\varphi \neq \psi$.
\end{defi}

We state the following connections between de Bruijn graphs and rotation-tries:
let $G_k(S) = (\kmers, E)$ be a 
DBG of order $k$ for a string $S$, and let $\sufftrie(S)$
be the rotation-trie of the same string $S$. It can be seen that the $k$-mers in 
$\kmers$ correspond to the nodes in $\sufftrie[k](S)$, i.e.\ 
for every node $y \in \kmers$ there is exactly one node $\phi$ in
$\sufftrie[k](S)$ with $l(\phi) = y$.

\begin{lemma} \label{lem-edgeWL}
Let $k < n$.
There is an edge $(x,y)^m \in E$ in $G_k(S) = (\kmers, E)$ with $c=x[1]$
if and only if there is a Weiner link $(\phi,\varphi,c)$ in $\sufftrie(S)$
such that
(a) $l(\phi) = y$, (b) $l(\varphi) = cy$, and (c) $m$ is the number of leaves
$\vartheta$ in the subtree rooted at $\phi$ with $l(\vartheta)[n]=c$
(i.e.\ the last character in the label of the leaf is $c$).
\end{lemma}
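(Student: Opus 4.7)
The plan is to reduce both sides of the biconditional to the same combinatorial quantity: the number of cyclic occurrences of the $(k+1)$-mer $cy$ in $S$. I would begin by observing that, since $k<n$, a $(k+1)$-mer can begin at any of positions $1,\ldots,n$ of $Z_k(S)=S[1\sdots n]S[1\sdots k]$, and the appended copy of $S[1\sdots k]$ realizes each cyclic wrap-around $(k+1)$-substring of $S$ exactly once. Thus occurrences of a fixed $(k+1)$-mer in $Z_k(S)$ are in bijection with cyclic starting positions in $S$. Moreover, the side condition $|l(\phi)|=|S|$ in Definition~\ref{def:wlinks} cannot fire because $|l(\phi)|=k<n$, so a Weiner link $(\phi,\varphi,c)$ with $l(\phi)=y$ is simply the presence of a trie node $\varphi$ with $l(\varphi)=cy$.

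For the forward direction, an edge $(x,y)^m \in E$ with $x[1]=c$ arises, by Definition~\ref{def:debruijngraph}, from a $(k+1)$-mer $w$ occurring $m$ times in $Z_k(S)$ with $w[1\sdots k]=x$ and $w[2\sdots k+1]=y$; this forces $w=cy$. Any such occurrence shows that $cy$, and hence $y$, is a prefix of some rotation of $S$, so $\sufftrie(S)$ contains nodes $\phi$ and $\varphi$ with $l(\phi)=y$ and $l(\varphi)=cy$, providing the required Weiner link. The converse is symmetric: the existence of $\varphi$ with $l(\varphi)=cy$ gives at least one cyclic occurrence of $cy$, hence at least one occurrence in $Z_k(S)$, and Definition~\ref{def:debruijngraph} produces the edge $(x,y)^m$ with $x=w[1\sdots k]$ starting in $c$.

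It remains to match the multiplicities. Since each leaf of $\sufftrie(S)$ carries a distinct length-$n$ label equal to a rotation of $S$, the leaves correspond bijectively to cyclic starting positions in $S$. A leaf $\vartheta$ corresponding to starting position $i$ lies in the subtree rooted at $\phi$ iff $S[i\sdots i+k-1]=y$ (read cyclically), and its last character $l(\vartheta)[n]$ equals the cyclic predecessor $S[i-1]$. Hence the leaves counted in (c) are in bijection with cyclic positions $i$ for which $S[i-1\sdots i+k-1]=cy$, i.e.\ with cyclic occurrences of $cy$; by the opening observation their number equals $m$.

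The main technical obstacle I anticipate is the careful bookkeeping for the cyclic wrap-around: verifying that every cyclic occurrence of $cy$ is counted exactly once as an occurrence in $Z_k(S)$ (in particular, that occurrences straddling the boundary between $S$ and the appended prefix $S[1\sdots k]$ are neither missed nor double-counted), and that the identification of $l(\vartheta)[n]$ with the cyclic predecessor of the starting position remains correct for the rotation beginning at position~$1$. Once these boundary cases are handled, conditions (a)--(c) follow directly from unfolding the definitions.
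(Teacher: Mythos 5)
Your proposal is correct and follows essentially the same route as the paper: both reduce each side to the number of occurrences of $cy$ in $Z_k(S)$, and both establish the multiplicity claim via the same cyclic-shift observation — that a leaf in the subtree of $\phi$ whose label ends in $c$ corresponds, by rotating $c$ to the front, to a rotation of $S$ with prefix $cy$. Your write-up is somewhat more explicit about the boundary bookkeeping (the bijection between $(k+1)$-mer occurrences in $Z_k(S)$ and cyclic positions, and the vacuity of the $|l(\phi)|=|S|$ clause when $k<n$), but the underlying argument is the same.
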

\begin{proof}
'$\Rightarrow$' If $(x,y)^m \in E$ and $c=x[1]$, then by definition the
string $z=cy$ (of length $k+1$) occurs exactly $m$ times in $Z_k(S)$.
Obviously, there are nodes $\phi$ and $\varphi$ in $\sufftrie(S)$ such that
$l(\phi) = y$ and $l(\varphi) = cy$. Hence there is Weiner link 
$(\phi,\varphi,c)$. Consider a leaf $\vartheta$ in the subtree rooted at 
$\phi$ with $l(\vartheta)[n]=c$. The string $l(\vartheta)$ is 
a rotation of $S$. If we cyclically shift the last character $c$ of 
$l(\vartheta)$ to the front, then we obtain another rotation of $S$ that has 
$cy$ as a prefix. It is not difficult to see that there are $m$ occurrences 
of $cy$ in $Z_k(S)$ if and only if there a $m$ rotations of $S$ that have 
$cy$ as a prefix. This proves the claim.\\
'$\Leftarrow$' Suppose there is a Weiner link 
$(\phi,\varphi,c)$ in $\sufftrie(S)$
such that (a) $l(\phi) = y$, (b) $l(\varphi) = cy$, and (c) $m$ is the number 
of leaves in the subtree rooted at $\phi$ with $l(\vartheta)[n]=c$. 
This implies that the string $z=cy$ occurs exactly $m$ times in $Z_k(S)$.
Thus, $(x,y)^m \in E$.
\end{proof}

%%%% EDGE MINIMIZATION PROBLEM %%%%%%%%%%%%%%%%%%%%%%%%%%%%%%%%%%%%%%%%%%%%%%%%

\Section{Edge minimization in de Bruijn graphs}

In this section, we will introduce the DBG edge minimization problem.
Edge reduction between two nodes $x,y \in \kmers$ in a DBG 
$G_k(S)$ is defined as follows: if $y$ is the only
successor of $x$ and $x$ is the only predecessor of $y$, 
then the number of edges between $x$ and $y$ can be reduced to one.

\begin{defi} \label{def:edgereduceddbg}
Let $G_k(S)=(\kmers,E)$ be a DBG for string $S$ of order $k$. Denote by
\[
	F \coloneqq \{ ~ (x,y)^m \in E ~ \mid ~ y \text{ is the only successor of } x \text{ and } x \text{ is the only predecessor of } y ~ \}
\]
the set of fusible edges. The edge-reduced de Bruijn graph ${\tilde G}_k(S)=(\kmers,\tilde E)$ of $G_k(S)$
is defined by decreasing the multiplicity of all fusible edges to one, that is,
\[
	\tilde E \coloneqq ( E \setminus F ) \uplus \{ ~ (x,y)^1 ~ \mid ~ (x,y)^m \in F ~ \}
\]
\end{defi}

An example of edge-reduced DBGs can be found in Figure \ref{fig:dbgs-edgereduced}. Definition
\ref{def:edgereduceddbg} is closely related to so-called compressed de Bruijn graphs \cite{MAR:LEE:SCH:2014},
with the difference that in Definition \ref{def:edgereduceddbg} edges are fused while in compressed DBGs nodes are merged.
We are now able to state the edge minimization problem.

\begin{problem}
Given a string $S$ of length $n$, the de Bruijn graph edge minimization problem asks for
finding an order $k \in [1,n]$ such that the amount of edges in the edge-reduced DBG
${\tilde G}_k(S)=(\kmers,\tilde E)$ is minimal, that is, find the order $k$ such that
\[
	k = \argmin_{\tilde k \in [1,n]}\{ ~ \sum_{(x,y)^m \in \tilde E} m ~ \mid ~ {\tilde G}_{\tilde k}(S) = (\kmers,{\tilde E})
	\text{ is the order } \tilde k \text{ edge-reduced DBG of } S ~ \}
\]
\end{problem}

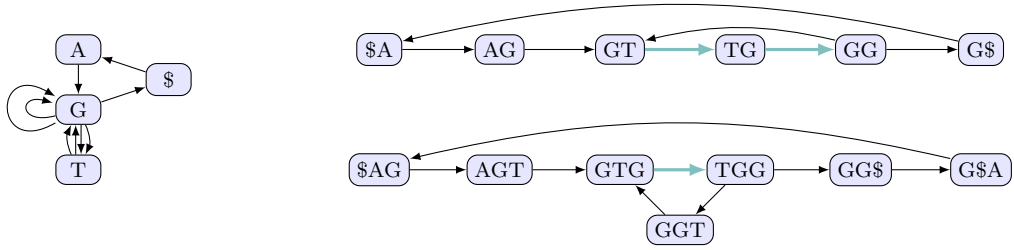
\begin{figure}[t]
\centering
\tikzset{>=latex}
\begin{tikzpicture}[scale=0.8,
	one/.style={minimum width=2em},
	dbgnode/.style={draw, rounded corners, minimum height=1em, one, fill=blue!10}]
\begin{scriptsize}

//k=1    
\node[dbgnode] at (2,0) 	(A) {A};
\node[dbgnode] at (3.5,-0.5) 	(!) {\$};
\node[dbgnode] at (2,-1)		(G) {G};
\node[dbgnode] at (2,-2)	(T) {T};

\path[->]
(!) edge (A)
(A) edge (G)
(G) edge (!)
(G.-80) edge (T.80)
(T.100) edge (G.-100)
(G) [loop left] edge (G)
(G) [loop left,in=150,out=210,looseness=7] edge (G);

\path[->,bend left=25]
(G) edge (T)
(T) edge (G);

//k=2
\node [dbgnode] 	at (7,0) 	(!A) {\$A};	
\node [dbgnode] 	at (9,0)	(AG) {AG};
\node [dbgnode] 	at (11,0) 	(GT) {GT};
\node [dbgnode] 	at (13,0) 	(TG) {TG};
\node [dbgnode]    	at (15,0) 	(GG) {GG};
\node [dbgnode]		at (17,0)  	(G!) {G\$};
	
\path[->]
(!A) edge (AG)
(AG) edge (GT) 
(GG) edge (G!)
[bend left=-13]
(G!.160) edge (!A.20)
(GG.160) edge (GT.20);
	
\path[->,very thick,teal!50]
(GT) edge (TG)
(TG) edge (GG);

//k=3;

\node[dbgnode] 	at (7	,-2) 	(!AG){\$AG};
\node[dbgnode] 	at (9	,-2) 	(AGT){AGT};
\node[dbgnode] 	at (11	,-2) 	(GTG){GTG};
\node[dbgnode] 	at (13	,-2) 	(TGG){TGG};
\node[dbgnode] 	at (15	,-2) 	(GG!){GG\$};
\node[dbgnode] 	at (17,	-2) 	(G!A){G\$A};

\node[dbgnode] 	at (12	,-3) 	(GGT){GGT};

\path[->]
(!AG) edge (AGT)
(AGT) edge (GTG)
(TGG) edge (GG!)
(GG!) edge (G!A)
(TGG) edge (GGT)
(GGT) edge (GTG)
[bend left=-13]
(G!A.160) edge (!AG.20);

\path[->,very thick,teal!50]
(GTG) edge (TGG);

\end{scriptsize}
\end{tikzpicture}
\caption{Edge-reduced de Bruijn graphs with order $k=1$ (left), $k=2$ (upper right) and $k=3$ (lower right)
	built from the string AGTGGTGG\$.
	Fused edges are indicated with teal arrows, the DBG with order $k=2$ contains the least edges,
	namely $9-2 = 7$ edges.}
\label{fig:dbgs-edgereduced}
\end{figure}

Figure \ref{fig:dbgs-edgereduced} illustrates the problem. At first sight, it seems that 
the problem is easily solvable: The higher the order gets, the more nodes exist,
resulting in fewer branching edges and therefore in more edge fusions. Then, at
a high enough order $k$, the amount of nodes prevents edge fusions because there exist not enough edges between nodes. 
Summing these observations up, one could think that the amount of
edges in an edge-reduced DBG follows a parabolic-shaped function depending on the order $k$,
where the only local minimum coincides with the global minimum.
Unfortunately, as Figure
\ref{fig:k-edgecnt-matrix} shows, this observation does not hold true for all strings: the file
\texttt{influenza} clearly shows that the function can contain multiple local minima,
so we need more advanced techniques to compute the order $k$ such that
${\tilde G}_k(S)$ has a minimum amount of edges.

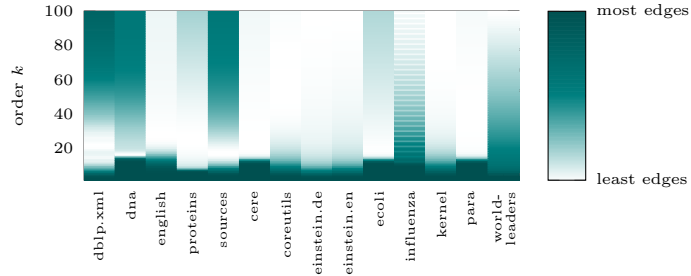
\begin{figure}[t]
\centering
\begin{tikzpicture}
\begin{axis}[y=.125ex,x=1em,ylabel={\tiny order $k$},
             colormap={redyellow}{rgb255(0cm)=(255,255,255); rgb255(1cm)=(0,128,128); rgb255(2cm)=(0,80,80)},
             colorbar,
             colorbar style={
		ytick={0,1},
		yticklabels={{\tiny least edges},{\tiny most edges}}
             },
	     xticklabel={\pgfmathparse{int(\tick-1)}\pgfplotstablegetelem{\pgfmathresult}{File}\of{edgeinfo.dat}\tiny\rotatebox{90}{\fontsize{.15cm}{.15cm}\selectfont\pgfplotsretval}},
             yticklabel style={font=\tiny},
	     xtick={1,...,14},
             enlargelimits=false,
             xmin=.5,xmax=14.5,
             ymin=1,ymax=100
             ]
	\addplot [matrix plot*,point meta=explicit] file [meta=index 2] {edgematrix.dat};
\end{axis}
\end{tikzpicture}
\vspace{-.5\baselineskip}
\caption{Matrix plot showing for each test file the relative amount of edges depending on the order
	$k$ of the edge-reduced DBG.}
\label{fig:k-edgecnt-matrix}
\end{figure}

\SubSection{Naive computational approach}

It is a consequence of Lemma \ref{lem-edgeWL} that 
the set of predecessors of a node $y$ in the graph $G_k(S)$
(the set of nodes $x\in \kmers$ with $(x,y)^m \in E$)
 can be determined as follows:
Find the node $\phi$ in $\sufftrie(S)$ that corresponds to $y$ (i.e.\
$l(\phi) = y$) and follow its Weiner links. For each Weiner link 
$(\phi,\varphi,c)$, the parent node $\psi$ of $\varphi$ is a predecessor
of $x$ in $G_k(S)$ (more precisely, $x=l(\psi)$ is a predecessor).
The next lemma is the basis of our algorithms.

\begin{lemma} \label{lem-key}
A node $\phi$ in $\sufftrie(S)$ has a unique Weiner link 
$(\phi,\varphi,c)$ and $\varphi$ has no siblings if and only if 
$x =l(\psi)$ is the only predecessor of $y=l(\phi)$, where
$\psi$ is the parent node of $\varphi$ in $G_k(S)$,
and $y$ is the only successor of $x$.
\end{lemma}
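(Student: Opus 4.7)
My plan is to reduce both directions of the lemma to two parallel translations between features of the rotation-trie and features of $G_k(S)$: Weiner links of $\phi$ correspond to predecessors of $y=l(\phi)$, while children of $\psi$ correspond to successors of $x=l(\psi)$.

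The first translation is essentially Lemma \ref{lem-edgeWL}: every edge $(x',y)^m\in E$ with $c'=x'[1]$ arises from a Weiner link $(\phi,\varphi',c')$, and the source $x'$ is exactly $l(\psi')$, where $\psi'$ is the trie-parent of $\varphi'$. I would additionally observe that distinct Weiner links of $\phi$ yield distinct predecessors: different Weiner links have different first characters $c'$, and hence the labels $l(\psi')=c'\,l(\phi)[1\sdots k-1]$ of the parents differ in their first character as well.

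For the second translation, I would invoke Definition \ref{def:debruijngraph} directly: a successor $y'$ of $x=l(\psi)$ in $G_k(S)$ satisfies $y'=x[2\sdots k]\,a$ for some character $a$ such that $xa$ occurs in $Z_k(S)$, and by construction of $\sufftrie(S)$ the strings $xa$ occurring in $Z_k(S)$ are precisely the labels of the children of $\psi$ at level $k+1$. Hence successors of $x$ in $G_k(S)$ are in bijection with the children of $\psi$ in $\sufftrie(S)$.

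With both correspondences in place, both directions of the lemma amount to bookkeeping. For ``$\Rightarrow$'', uniqueness of $\phi$'s Weiner link makes $x=l(\psi)$ the sole predecessor of $y$ via the first translation, while the absence of siblings of $\varphi$ makes $\varphi$ the only child of $\psi$, so by the second translation $y$ is the only successor of $x$. For ``$\Leftarrow$'', the same equivalences are run in reverse: one predecessor forces one Weiner link, and one successor (together with the fact that $\varphi$ is already a child of $\psi$) forces $\varphi$ to be the unique child of $\psi$, i.e.\ to have no siblings. The only non-routine point is the injectivity of the predecessor map, noted above; without it the $\Leftarrow$ direction would not go through cleanly.
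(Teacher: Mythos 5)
Your proposal is correct and follows essentially the same two-step route as the paper: both reduce the predecessor half to Lemma~\ref{lem-edgeWL} (the paper implicitly uses the same injectivity fact you make explicit, namely that distinct Weiner links have distinct first characters and hence distinct parent labels), and both handle the successor half by relating children of $\psi$ to the characters that can follow $x$ in $Z_k(S)$. Your formulation as two full bijections is slightly more systematic than the paper's targeted argument, but the underlying observations are identical, so I'd call it the same proof.
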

\begin{proof}
It follows directly from Lemma \ref{lem-edgeWL} that $\phi$ has a unique 
Weiner link $(\phi,\varphi,c)$ if and only if 
$x =l(\psi)$ is the only predecessor of $y=l(\phi)$ in $G_k(S)$, where
$\psi$ is the parent node of $\varphi$ in $\sufftrie(S)$.
Suppose that $\phi$ has a unique Weiner link $(\phi,\varphi,c)$.
We must show that $\varphi$ has no siblings (or equivalently,
its parent $\psi$ has only one child) if and only if
$y=l(\phi)$ is the only successor of $x=l(\psi)$.
Let $b = y[k]$ be the last character of $y$.\\
'$\Rightarrow$' If $\varphi$ has no siblings, then every occurrence of $x$
in $Z_k(S)$ is followed by the character $b$. Therefore,
$y$ is the only successor of $x$.\\
'$\Leftarrow$' If $\varphi$ has a sibling $\vartheta$, then there is
an occurrence of $x$ in $Z_k(S)$ that is followed by the character 
$l(\vartheta)[k] = d \neq b$. Therefore, $x$ has a least two successors.
\end{proof}

\begin{defi} \label{def:weights}
Let $\phi$ be a node at level $k$ in $\sufftrie(S)$ ($\phi \in \sufftrie[k](S)$). Define
\[
w_k(\phi) =\left\{\begin{array}{ll}
1&\mbox{if $\phi$ has a unique Weiner link $(\phi,\varphi,c)$
and $\varphi$ has no siblings}\\
|\phi| &\mbox{otherwise}
\end{array}\right.
\]
\end{defi}

\begin{lemma} \label{lem-sumOfWeights}
Let $m_k$ be the number of edges in the edge-reduced de Bruijn graph
$\tilde{G}_k(S)$. Then $m_k = \sum_{\phi\in\sufftrie[k](S)} w_k(\phi)$.
\end{lemma}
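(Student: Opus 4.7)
The plan is to assign each edge of $\tilde{G}_k(S)$ to its target node, sum the contributions node by node, and show that node $\phi$'s contribution is exactly $w_k(\phi)$.

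First I would fix a node $\phi \in \sufftrie[k](S)$ with $y = l(\phi)$ and compute the total multiplicity of in-edges to $y$ in the ordinary DBG $G_k(S)$. By Lemma \ref{lem-edgeWL}, the in-edges to $y$ correspond to Weiner links of $\phi$, and the multiplicity of the in-edge coming with character $c$ is exactly the number of leaves in the subtree rooted at $\phi$ whose label ends in $c$. Summing over all last characters accounts for every leaf of $\phi$ exactly once, so the total in-edge multiplicity of $y$ in $G_k(S)$ equals $|\phi|$.

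Next I would split into cases based on the Definition \ref{def:edgereduceddbg} criterion for an edge to be fusible. An in-edge $(x,y)^m$ is fusible iff $x$ is the only predecessor of $y$ and $y$ is the only successor of $x$. If $\phi$ has more than one Weiner link, then $y$ has at least two predecessors, so \emph{no} in-edge of $y$ is fusible and the edge-reduction leaves the total in-edge multiplicity at $|\phi|$. If $\phi$ has a unique Weiner link $(\phi,\varphi,c)$ but $\varphi$ has a sibling, then by (the proof of) Lemma \ref{lem-key} the sole predecessor $x$ has a second successor, so again the in-edge of $y$ is not fusible and the contribution stays at $|\phi|$. Finally, if $\phi$ has a unique Weiner link and $\varphi$ has no siblings, then Lemma \ref{lem-key} says both fusibility conditions hold, so the single in-edge of $y$ is reduced to multiplicity $1$. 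In every case the contribution of $\phi$ matches $w_k(\phi)$ from Definition \ref{def:weights}.

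To finish I would observe that $\{\sufftrie[k](S)\}$ is in bijection with $\kmers$ via $\phi \mapsto l(\phi)$, and that every edge of $\tilde{G}_k(S)$ has a unique target $y \in \kmers$. Hence summing the (reduced) in-edge multiplicity over all target nodes counts every edge of $\tilde{G}_k(S)$ with multiplicity exactly once, yielding $m_k = \sum_{\phi \in \sufftrie[k](S)} w_k(\phi)$. The only real subtlety is the case split above; once Lemma \ref{lem-key} is in hand this is essentially bookkeeping, so I do not expect a serious obstacle.
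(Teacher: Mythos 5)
Your proposal is correct and follows essentially the same approach as the paper: count in-edges per target node $y = l(\phi)$, use Lemma~\ref{lem-edgeWL} to see the total in-multiplicity is $|\phi|$, then case-split via Lemma~\ref{lem-key} on whether the unique-Weiner-link/no-sibling condition holds to conclude the reduced contribution is exactly $w_k(\phi)$. Your write-up simply spells out the case analysis in more detail than the paper's one-paragraph argument.
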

\begin{proof}
Let $\phi$ be a node in $\sufftrie[k](S)$. The weight
$w_k(\phi)$ equals the number of incoming edges to node $y=l(\phi)$
in $\tilde{G}_k(S)$: If $\phi$ has a unique Weiner link $(\phi,\varphi,c)$
and $\varphi$ has no siblings, then the corresponding edge 
$(x,y)^m$ in $G_k(S)$ is fusible, and its multiplicity is reduced to $1$ 
in $\tilde{G}_k(S)$. Otherwise, all the $|\phi|$ incoming edges to $y$ 
are not fusible.
\end{proof}

Using Lemma \ref{lem-sumOfWeights}, we are able to
describe a naive solution to the edge minimization problem.
We traverse $\sufftrie(S)$ level-wise. At level $k$, we compute the weight 
$w_k(\phi)$ of each node $\phi$ at this level by checking whether or not
$\phi$ has a unique Weiner link $(\phi,\varphi,c)$
and $\varphi$ has no siblings. The sum of these weights yields $m_k$,
the number of edges in $\tilde{G}_k(S)$ and
$\tilde{k} = \argmin_{k \in [1,n]}\{m_k\}$ is the solution we are looking for.
The drawback of this method is that all nodes of $\sufftrie(S)$ must
be visited. Since there are $O(n^2)$ many nodes in $\sufftrie(S)$, the 
resulting worst-case time complexity of this method is $O(n^2)$, 
which is not acceptable for big data.

%%%% ALGORITHM %%%%%%%%%%%%%%%%%%%%%%%%%%%%%%%%%%%%%%%%%%%%%%%%%%%%%%%%%%%%%%%%

\Section{Edge minimization algorithm}

To improve the run-time of the naive computational approach, we will 
update the amount of edges level by level, where levels are visited from top
to bottom.

\begin{lemma} \label{lem-WLs}
For every path $p$ from the root to a leaf in $\sufftrie(S)$, there is exactly
one node $\phi$ in $p$ so that (1) $\phi$ has a unique Weiner link, (2) the
ancestors of $\phi$ in $p$ have multiple Weiner links, and (3) every node in 
the subtree rooted at $\phi$ has a unique Weiner link.
\end{lemma}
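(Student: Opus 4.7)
The plan is to derive the lemma from two structural facts about $\sufftrie(S)$ and then combine them on a root-to-leaf path. The first is a downward monotonicity: if a node $\phi$ has a unique Weiner link, so does every node in the subtree rooted at $\phi$. The second is that every leaf of $\sufftrie(S)$ has a unique Weiner link. Given these, fix a root-to-leaf path $p$ and let $\phi$ be the topmost node of $p$ whose Weiner link is unique; the leaf fact guarantees existence. Then (1) is immediate, and (3) follows by applying monotonicity throughout the entire subtree of $\phi$ (not only along $p$). For (2), each strict ancestor $\phi''$ of $\phi$ on $p$ has at least one Weiner link --- its label prefixes some rotation of $S$, which has a well-defined cyclic predecessor --- but no unique Weiner link by the choice of $\phi$, so $\phi''$ has multiple. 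Uniqueness of $\phi$ on $p$ is automatic: any strictly higher node fails (1), and any strictly lower node has $\phi$ as an ancestor with a unique Weiner link, violating (2).

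For the monotonicity step I would use the same correspondence between leaves in a subtree and rotations of $S$ that underlies the proof of Lemma~\ref{lem-edgeWL}. Let $(\phi,\varphi,c)$ be the unique Weiner link of $\phi$, let $\phi'$ be a child of $\phi$ with $l(\phi') = l(\phi)a$, and consider any rotation of $S$ beginning with $l(\phi')$. Such a rotation also begins with $l(\phi)$, so by the uniqueness hypothesis its cyclic predecessor character is $c$; hence $c\,l(\phi')$ (or $c\,l(\phi')[1\sdots n-1]$ in the boundary case $|l(\phi')| = n$) prefixes another rotation of $S$ and yields a Weiner link $(\phi',\varphi',c)$. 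Conversely, a second Weiner link of $\phi'$ with some $c' \neq c$ would force a rotation beginning with $l(\phi')$, and therefore with $l(\phi)$, to have a different cyclic predecessor, contradicting the hypothesis. Iterating this one-step argument gives condition (3) on the whole subtree of $\phi$.

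For the leaf fact, a leaf $\phi$ has label $l(\phi) = S[j\sdots n]S[1\sdots j-1]$ for some $j$, and the special case of Definition~\ref{def:wlinks} requires $l(\varphi) = c \cdot l(\phi)[1\sdots n-1]$ to be another length-$n$ trie label, i.e.\ itself a rotation of $S$. Since the null-terminator $\$$ makes all rotations of $S$ pairwise distinct, there is exactly one character $c$ --- the cyclic predecessor $S[j-1]$, with $S[0]$ interpreted as $S[n]$ --- for which $c \cdot l(\phi)[1\sdots n-1]$ is again a rotation, so $\phi$ has a unique Weiner link.

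The main obstacle I expect is the boundary case in the monotonicity step, where the child $\phi'$ is a leaf and the Weiner link definition switches from $l(\varphi') = c\,l(\phi')$ to its cyclic special case. Phrasing the whole argument uniformly in terms of cyclic predecessors of rotations starting with $l(\phi)$ avoids splitting into separate sub-cases for internal nodes and leaves and keeps the induction clean.
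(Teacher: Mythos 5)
Your proposal is correct and takes essentially the same approach as the paper: choose $\phi$ as the first node on $p$ with a unique Weiner link (existing because every leaf has one), and derive (3) from the downward inheritance of the unique-Weiner-link property throughout the subtree of $\phi$. The paper phrases the inheritance in one stroke --- a unique Weiner link forces every leaf below $\phi$ to end with the same character $c$ --- while you unfold it into a per-child induction via cyclic predecessors, but the underlying observation is identical.
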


\begin{proof}
Let $\phi$ be the first node on a path from the root to a leaf $\vartheta$
that has a unique Weiner link ($\vartheta$ has a unique Weiner link, so $\phi$
must exist). Let $c$ be the last character in the label of $\vartheta$,
i.e.\ $l(\vartheta)[n]=c$. Since $\phi$ has a unique Weiner link,
the last character in the label of each leaf in the subtree rooted at $\phi$
must be $c$. This implies (3) and proves the lemma.
\end{proof}

\begin{lemma} \label{lem-oneChild}
Let $\zeta$ be a node in $\sufftrie(S)$ at level $k<n$ that has only one child $\phi$.
%(at level $k+1$).
\begin{enumerate}[1.\hspace{-2em}]
\item \hspace{2em}If $\zeta$ has a unique Weiner link, then $w_{k+1}(\phi)=1$.
\item \hspace{2em}If $\zeta$ has multiple Weiner links, then 
$w_{k+1}(\phi)=w_k(\zeta)$.
\end{enumerate}
\end{lemma}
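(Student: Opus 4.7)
The plan is to exploit the structural fact that when $\zeta$ has the single child $\phi$, the subtrees rooted at $\zeta$ and $\phi$ have the same leaf set, so $|\phi|=|\zeta|$ and $l(\phi)=l(\zeta)b$ for some character $b$. In particular, every occurrence of $l(\zeta)$ in $Z_k(S)$ is immediately followed by $b$, so the occurrences of $l(\zeta)$ and of $l(\phi)$ are in bijection and have identical preceding characters. Since the Weiner links of a node correspond (by Definition \ref{def:wlinks}) to the distinct characters preceding its label in $Z_k(S)$, this yields a bijection between the Weiner links of $\zeta$ and those of $\phi$, sending $(\zeta,\psi,c)$ to $(\phi,\varphi,c)$ where $\varphi$ is the child of $\psi$ reached through the edge labelled $b$.

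For part 1, I would take the unique Weiner link $(\zeta,\psi,c)$ of $\zeta$. The bijection above produces a unique Weiner link $(\phi,\varphi,c)$ of $\phi$, so by Definition \ref{def:weights} it suffices to show that $\varphi$ has no siblings, i.e. that $\varphi$ is the only child of $\psi$. For this I would observe that the children of $\psi$ correspond to the distinct characters following occurrences of $l(\psi)=cl(\zeta)$ in $Z_k(S)$; but every occurrence of $cl(\zeta)$ is, in particular, an occurrence of $l(\zeta)$ preceded by $c$, and hence is followed by $b$. Thus $\psi$ has $\varphi$ as its only child, and $w_{k+1}(\phi)=1$.

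For part 2, if $\zeta$ has multiple Weiner links, then Definition \ref{def:weights} immediately gives $w_k(\zeta)=|\zeta|$. The same bijection transfers multiplicity: $\phi$ also has multiple Weiner links, so $w_{k+1}(\phi)=|\phi|=|\zeta|=w_k(\zeta)$, as required.

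The one delicate point to address is the boundary case $k+1=n$, where $\phi$ is a leaf and the Weiner-link definition switches to its truncated second form. This turns out to be harmless: in part 1 a singleton $\phi$ makes $w_{k+1}(\phi)=1$ directly from Definition \ref{def:weights}, and in part 2 the existence of multiple Weiner links at $\zeta$ forces $|\zeta|\ge 2$, so $\phi$ cannot be a singleton leaf. The main technical obligation is thus to pin down the occurrence-level bijection carefully; everything else then follows mechanically from Definitions \ref{def:wlinks} and \ref{def:weights}.
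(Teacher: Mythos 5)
Your argument is correct and follows essentially the same route as the paper's proof: transfer the Weiner-link structure from $\zeta$ to its only child $\phi$, then invoke Definition \ref{def:weights}. You make explicit what the paper dismisses with ``it is easy to see'' — the occurrence-level bijection between the Weiner links of $\zeta$ and $\phi$, and the fact that the children of $\psi$ are forced to a single $b$-labelled edge — and you add a short, correct check of the $k+1=n$ boundary case that the paper handles implicitly; these are elaborations, not a different method.
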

\begin{proof}
(1) If $\zeta$ has a unique Weiner link, then $\phi$ has a unique Weiner 
link $(\phi,\varphi,c)$. It is easy to see that in this case
$\phi$ is the only child of its parent if and only if $\varphi$ is the only 
child of its parent. Thus, $\varphi$ has no siblings and $w_{k+1}(\phi)=1$
by Definition \ref{def:weights}.\\
(2) If $\zeta$ has multiple Weiner links, then so does its only child $\phi$.
By Definition \ref{def:weights}, $w_k(\zeta)= |\zeta|$ and 
$w_{k+1}(\phi)= |\phi|$. Now the claim follows from $|\zeta|= |\phi|$.
\end{proof}

According to the preceding lemma, for every node $\phi$ in $\sufftrie(S)$ at level 
$k+1$ that has no siblings, we can infer $w_{k+1}(\phi)$ if we know
the number of Weiner links of its parent $\zeta$ as well as $w_k(\zeta)$.
Moreover, a node $\phi$ that has no siblings \emph{inherits} the weight from
its parent $\zeta$ except for the special case $\ast$ in which $\zeta$ has 
a unique outgoing Weiner link $(\zeta,\psi,c)$ and $\psi$ has 
multiple siblings (in this case $w_k(\zeta)= |\zeta|$ and $w_{k+1}(\phi)=1$).

\begin{algorithm}[t]
\scriptsize
\KwData{Rotation-trie $\sufftrie(S)$ of a string $S$ over alphabet $\Sigma$, a bitvector $F$ of size $|S|$ initialized with zeros.}
\KwResult{Order $k$ such that the edge-reduced deBruijn graph ${\tilde G}_k(S)$ has the minimum number of edges.}
\BlankLine
\BlankLine
\DontPrintSemicolon
\parbox{8em}{$n_T \gets \sigma$} ~ $m \gets |S|$ \tcp*[r]{node counter and edge counter}
\parbox{8em}{$k^* \gets 1$} ~ $m^* \gets |S|$ \tcp*[r]{order with minimal edgecount $m^*$}
\parbox{8em}{$fusible \gets 0$}  \tcp*[r]{inherited fusions}
push the children of the root node of $\sufftrie(S)$ onto a queue $Q$\;
\BlankLine
\For{$k \gets 1 ~ \KwTo ~ |S|-1$}{
	$m \gets m - fusible$ \tcp*[r]{incorporate newly inherited fusions}
	$fusible \gets 0$ \;
	\BlankLine
	\ForEach(\tcp*[f]{reverse old fusions}){$\phi \in Q$}{
		\If{$F[rb(\phi)] = 1$}{
			$\zeta \gets $ parent of $\phi$ \;
			$m \gets m + (|\zeta| - 1)$ \;
			$F[rb(\phi)] \gets 0$ \;
		}
	}
	\BlankLine
	$size \gets |Q|$ \;
	\For{$q \gets 1 ~ \KwTo ~ size$}{
		$\phi \gets $ first element of $Q$ \;
		pop first element of $Q$ \;
		\If(\tcp*[f]{reclassify edges}){$\phi$ has a unique Weiner link $(\phi,\varphi,c)$}{
			\If(\tcp*[f]{fusion in current order}){$\varphi$ has no siblings}{
				$m \gets m - (|\phi| - 1)$\tcp*[f]{$|\varphi|=|\phi|$}
			} \Else(\tcp*[f]{possible fusion in next order}) {
				$fusible \gets fusible + (|\phi| - 1)$\tcp*[f]{$|\varphi|=|\phi|$}
			}
			$F[rb(\phi)] \gets 1$ \;
		}
		\BlankLine
		\ForEach(\tcp*[f]{enqueue new nodes}){outgoing Weiner link $(\phi,\varphi,c)$ of $\phi$}{
			\If{$\varphi$ has siblings}{
				\If(\tcp*[f]{update node counter}){$\varphi$ is not the rightmost sibling}{
					$n_T \gets n_T + 1$ \;
					\If{$n_T \geq m^*$}{
						\Return $k^*$ \;
					}
				}
				push $\varphi$ onto the back of $Q$ \;
			}
		}
	}
	\BlankLine
	\If{$m < m^*$}{
		$k^* \gets k$ \;
		$m^* \gets m$ \;
	}
}
\Return $k^*$ \;

\caption{De Bruijn graph edge minimization using a rotation-trie.}
\label{alg:edgemin-st}
\end{algorithm}

These observations lead to an efficient incremental algorithm, which
calculates $m_k$ from $m_{k-1}$ by a case
distinction on the differences between the edge-reduced de Bruijn graphs
${\tilde G}_{k-1}(S)$ and ${\tilde G}_k(S)$.
The final implementation (Algorithm \ref{alg:edgemin-fm}) does not efficiently
support the operation 'How many children does a node have?'
That is why Algorithm \ref{alg:edgemin-st} does not use this operation.

At the moment, we will ignore lines 25-28 of the algorithm.
Before we will prove the correctness of Algorithm \ref{alg:edgemin-st},
%in Theorem \ref{thm-correctness}, 
we exemplify the different cases in the 
algorithm by considering the nodes T, TG, TGG, and TGG\$ in the example
of Figure \ref{fig:dbg-st}. In our considerations, we will make use of
the following fact (see Theorem \ref{thm-correctness}):
Before the body of the outer for-loop is executed for a value $k$,
the queue $Q$ contains exactly those nodes of $\sufftrie(S)$ at level $k$ that have 
siblings (nodes that have no sibling are treated implicitly by the
observations above). In the execution of the for-loop
for $k=1$, it is detected that node T has a unique Weiner link to GT, but
GT has siblings. Thus, the multiple edges from G to T in the DBG of order $1$ 
cannot be fused. In line 21, the variable $fusible$ is incremented
by $|T|-1$ ($|T|$ denotes the number of leaves in the subtree rooted at T)
and the rightmost leaf
in the subtree rooted at T is marked (i.e.\ $F[rb(T)]=1$).
The variable $fusible$ will be used to take care of the special case $\ast$.
From that point in time, it is clear that TG, TGG, and TGG\$ have unique Weiner 
links. 

Before the for-loop is executed for $k=2$, the queue $Q$ contains
the nodes AG, G\$, GG, and GT. In line 6, $m=m_1$ 
is decremented by the value of $fusible$. Recall that the contribution of 
node T to $fusible$ was $|T|-1$. Now, what is the contribution of 
node TG to $m_2$? Obviously, TG has a unique Weiner 
link pointing to GTG because T has a unique Weiner link pointing to GT.
Furthermore, TG has siblings if and only if GTG has siblings.
We infer from TG $\notin Q$ that TG has no siblings, hence neither GTG does.
Consequently, $w_2(TG) =1$ because the multiple edges from GT to TG in the
DBG of order $2$ can be fused. The difference between the weights of T and
TG is $w_1(T)-w_2(TG) = |T|-1$. This value was (correctly) subtracted from
$m=m_1$. What would happen if (fictitiously) the node TG were in $Q$?
In this case, GTG would have siblings (since TG has siblings) 
and the subtraction of $|T|-1$ from $m$ must be compensated
by adding $|T|-1$ to $m$.
This would be done in the loop in lines 8 to 12. If the node TG were in $Q$,
then its parent T would have multiple children, and all of its
children would be in $Q$ because they have siblings. For the rightmost child
$\vartheta$ of T, it would follow $rb(\vartheta) = rb(T)$.
Since $F[rb(T)]=1$, the value $w_2(TG)-1$ would be added to $m$ in line 11
(and $F[rb(T)]$ would be set to $0$). 

Before the for-loop is executed for $k=3$, the queue $Q$ contains
the nodes GG\$ and GGT. Node TGG has a unique Weiner 
link pointing to node GTGG because TG has a unique Weiner link pointing to GTG.
The fact that TGG is not in $Q$ means that GTGG has no siblings. Therefore, the
multiple edges from GTG to TGG in the DBG of order $3$ can be fused and 
$w_3(TGG) = 1 = w_2(TG)$. Since $w_2(TG)$ is already included in the value
of $m$, there is no need to change $m$ w.r.t.\ node TGG (i.e.\ given $m_2$,
the incremental computation of $m_3$ does not need to consider this case 
explicitly). We say that TGG (implicitly) inherits its weight from its
parent TG.

Before the for-loop is executed for $k=4$, the queue $Q$ contains
the nodes TGG\$ and TGGT. Node TGG\$ has a unique Weiner link pointing to 
node GTGG\$ because TGG has a unique Weiner link pointing to GTGG.
However, TGG\$ has siblings because it is contained in $Q$. In this case,
$w_3(TGG)$ cannot be inherited and $m$ must be updated. The change
from $m_3$ to $m_4$ w.r.t.\ node TGG and its children 
$\vartheta_1,\dots,\vartheta_q$ can be computed by adding
(a) $|TGG|-1$  to $m$ (this corresponds to the situation in which 
none of the nodes $\vartheta_1,\dots,\vartheta_q$ admits a fusion) and 
(b) subtracting $|\vartheta_i|-1$ from $m$ for each child
$\vartheta_i$ of node TGG that admits a fusion.
(a) will be done in lines 8 to 12 because $\vartheta_q \in Q$
and (b) will be done in the for-loop starting at line 14.
Let us prove the correctness of Algorithm \ref{alg:edgemin-st}.

\begin{theorem}\label{thm-correctness}
Algorithm \ref{alg:edgemin-st} maintains the following invariants:
\begin{enumerate}[1.\hspace{-2em}]
\item \hspace{2em}Before the body of the for-loop starting at line 5 is executed for a value $k$,\newline
	\phantom{3.~~\,}  the queue $Q$ contains exactly those nodes of $\sufftrie[k](S)$ 
that have siblings.\label{invariant1}
\item \hspace{2em}After the body of the for-loop starting at line 5 was executed for a value $k$,\newline
	\phantom{3.~~\,}  the variable $m$ stores the value $m_{k}$.\label{invariant2}
\item \hspace{2em}Before the body of the for-loop starting at line 5 is executed for a value $k$,\newline
	\phantom{3.~~\,}  the variable $n_T$ stores the value $|\sufftrie[k](S)|$.\label{invariant3}
\end{enumerate}
\end{theorem}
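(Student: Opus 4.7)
I would prove the three invariants by simultaneous induction on $k$. The base case $k=1$ is direct: initially $n_T = \sigma = |\sufftrie[1](S)|$ (under the natural assumption that every alphabet character occurs in $S$), the queue is loaded with all level-$1$ nodes (which all have siblings when $\sigma \ge 2$), and the starting value $m = |S| = \sum_{\phi \in \sufftrie[1](S)} |\phi|$ is the correct upper bound from which the first execution of the body produces $m_1$ by applying the decrement on line~19 to each $\phi$ with $w_1(\phi) = 1$.

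For the inductive step I would tackle invariant~1 first using a \emph{branching-propagation} lemma: if $\varphi \in \sufftrie[k+1](S)$ has siblings, then its unique Weiner predecessor $\phi \in \sufftrie[k](S)$ also has siblings. The argument is that siblings of $\varphi$ correspond to distinct right-extensions of the parent label $c \cdot l(\phi)[1\sdots k-1]$ occurring in $Z_k(S)$, and stripping the leading $c$ leaves those same extensions attached to the parent of $\phi$, so the parent of $\phi$ also branches. Consequently every branching level-$(k+1)$ node is reached by enumerating outgoing Weiner links out of $Q$, and the guard on the push step keeps exactly the branching ones. Invariant~3 then follows quickly: each non-rightmost sibling at level $k+1$ is visited exactly once (via its unique incoming Weiner link from a node in $Q$), so the total number of $n_T$-increments equals $\sum_{\psi \in \sufftrie[k](S)} \max(c_\psi - 1,\,0) = |\sufftrie[k+1](S)| - |\sufftrie[k](S)|$, using that every level-$k$ node with $k < n$ has at least one child.

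Invariant~2 is the main obstacle, and I would attack it via Lemma~\ref{lem-sumOfWeights} combined with Lemma~\ref{lem-oneChild}. Partition the nodes at level~$k$ into (a)~siblingless nodes, whose weight is inherited implicitly from their parent except in the ``special case $\ast$'' (parent with a unique Weiner link whose image branches), and (b)~nodes in $Q$, processed explicitly by the main loop (lines~14--24). Each occurrence of case~$\ast$ at level $k-1$ had incremented $fusible$ by $|\zeta| - 1$ in the previous iteration (line~21); line~6 of iteration~$k$ now realizes this deferred decrement. The subtlety is that the deferral was optimistic: it assumed $\zeta$ had a unique child, and if $\zeta$ actually has multiple children then its rightmost child inherits $rb(\zeta)$ and appears in the current $Q$, so the reverse-fusion loop (lines~8--12) re-adds $|\zeta| - 1$ to $m$. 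A case analysis on whether each $\phi \in Q$ has a unique or multiple Weiner link, and whether its Weiner image branches, then shows that the line~19 subtractions, the new $fusible$ contributions written on line~21, and the corrections on lines~6 and~11 combine to give exactly $\sum_\phi w_k(\phi) - \sum_\zeta w_{k-1}(\zeta)$. The technical crux is arguing that the $F$-marks never double-count: each unique-Weiner-link node writes exactly one mark at $rb(\phi)$, and since $rb$ identifies a single rightmost descendant along each generation, the mark is either inherited silently through siblingless levels (preserving the deferral) or consumed exactly once at the first level that branches (cancelling it), but never both.
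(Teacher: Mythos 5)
Your proposal is correct and follows essentially the same approach as the paper's proof: simultaneous induction on all three invariants, with an explicit ``siblings propagate backward across Weiner links'' argument for invariant~1, the $fusible$-deferral plus reverse-fusion ($F$-mark) accounting for invariant~2, and the count of non-rightmost children for invariant~3. The paper states ``Clearly, if $\varphi$ has siblings, so does $\phi$'' without the explicit stripping-of-the-leading-character justification you give, while conversely your sketch of the invariant-2 case analysis and the $F$-mark bookkeeping is less detailed than the paper's (which pins down the ancestor $\alpha$ with $rb(\alpha)=rb(\zeta)$ and traces exactly when the mark was set and when $fusible$ was debited), but the ideas match.
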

\begin{proof}
%We will prove (\ref{invariant1}) and postpone (\ref{invariant2}).
For $k=1$ property \ref{invariant1} holds true because initially
all nodes of $\sufftrie[1](S)$ (the children of the root node) 
are pushed onto $Q$ (and there are
at least two because of the EOF-symbol). So let $k\geq 2$.
Lines 23--29 of Algorithm \ref{alg:edgemin-st} push a 
node $\varphi$ onto $Q$ only if $\varphi$ has siblings. So we must prove that
\emph{all} nodes with siblings are pushed onto $Q$. For an indirect proof,
suppose that there is a node $\varphi \in \sufftrie[k](S)$ that has
siblings but is not pushed onto $Q$. Let $(\phi,\varphi,c)$ be its
incoming Weiner link. Clearly, if $\varphi$ has siblings, so does $\phi$.
By induction, we may assume that before the
for-loop is executed for the value $k-1$, the queue $Q$ contains 
exactly those nodes of $\sufftrie[k-1](S)$ that have siblings. 
Consequently, at that point in time, $\phi$ is in $Q$. Hence
$\varphi$ will be pushed onto $Q$ in lines 23--29. This contradiction
proves that property \ref{invariant1} is an invariant.

In order to show that property \ref{invariant2} is also invariant,
we may assume by induction
that $m = m_{k-1}$ before the for-loop is executed for $k$ (initially
$m = m_0 = |S|$, i.e.\ there are no fusions). We will show by a case 
distinction that after the execution of the body of the for-loop for $k$, 
the value of $m$ equals $m_k$. Let $\phi$ be a node in $\sufftrie(S)$ at 
level $k$ and let $\zeta$ be its parent node.
\begin{enumerate}
\item $\phi$ has no siblings (or equivalently, $\phi \notin Q$).
In this case, $\phi$ has a unique Weiner link if and only if 
$\zeta$ has a unique Weiner link.
\begin{enumerate}
\item $\phi$ has a unique Weiner link $(\phi,\varphi,c)$.\\
In this case, $w_k(\phi)=1$ because $\varphi$ has no siblings. Furthermore,
$\zeta$ has a unique Weiner link to the parent $\psi$ of $\varphi$. If $\psi$
has no siblings, then $w_{k-1}(\zeta)=1$ and $\phi$ inherits its weight from 
its parent. Otherwise, the parent $\psi$ of $\varphi$ has siblings. In this 
special case $\ast$, the variable fusible was incremented by $|\zeta|-1$ 
in the previous iteration of the for-loop, and this value was subtracted 
from $m$ in the current iteration. 
\item $\phi$ has multiple Weiner links.\\
In this case, $\zeta$ has multiple Weiner links. Since
$|\zeta|=|\phi|$, it follows that $w_{k-1}(\zeta) = |\zeta|=|\phi| = w_k(\phi)$.
That is, $\phi$ inherits its weight from its parent.
\end{enumerate}
\item $\phi$ has siblings (or equivalently, $\phi \in Q$).
%Let $\vartheta$ be its rightmost sibling.
\begin{enumerate}
\item If $F[rb(\zeta)]=1$, then there is an ancestor $\alpha$ of $\zeta$
such that (i) $rb(\alpha)=rb(\zeta)$,
(ii) the variable fusible was incremented by 
$|\alpha|-1$ in a previous iteration $i$ of the for-loop,
(iii) $F[rb(\alpha)] = F[rb(\zeta)]$ was set to $1$ in iteration $i$,
(iv) the value $|\alpha|-1$ was subtracted from $m$ in iteration $i+1$,
and (v) $F[rb(\zeta)]=1$ was not reset to $0$ from iteration $i+1$ to the 
current iteration. It follows as a consequence that node $\zeta$ 
inherited its weight from its ancestors in those iterations and therefore 
$|\zeta|=|\alpha|$. Note that $\phi$ and all its siblings are in $Q$.
Since only its rightmost sibling $\vartheta$ is marked
(i.e.\ $F[rb(\vartheta)]=1$), $|\zeta|-1 = |\alpha|-1$
is added once to $m$ in line 11. In summary,
the subtree rooted at $\zeta$ contributes the value $|\zeta|$ to $m$.
This corresponds to the situation in which no child of $\zeta$ admits a 
fusion. Since $\phi$ and all its siblings are
in $Q$, possible fusions of the children of $\zeta$
are taken care of by lines 17-19.
\item If $F[rb(\zeta)]=0$, then the subtree rooted at $\zeta$ contributes the 
value $|\zeta|$ to $m=m_{k-1}$ and this value is not changed in the current
iteration. The remaining arguments are the same as in the preceding
case.
\end{enumerate}
\end{enumerate}
We finally show property \ref{invariant3} holds.
For $k=1$ it is true, because $\sufftrie[1](S)$ contains a node for each letter of $\Sigma$
and $n_T$ is initialized with the value $\sigma$. We further show that $n_T$ will be 
correctly updated in a loop iteration. Therefore we have a look on the changes between 
$|\sufftrie[k](S)|$ and $|\sufftrie[k+1](S)|$. The number of nodes in $\sufftrie[k+1](S)$ equals the 
number of children of the nodes in $\sufftrie[k](S)$. Each node in $\sufftrie[k](S)$ has exactly one
rightmost child (a single child is rightmost). So the amount of nodes changes by the 
number of children that are not rightmost. If a node is a not rightmost child, it has 
siblings and therefore it  will be added to the queue in line 29 due to property \ref{invariant1}. 
Right before that $n_T$ will be incremented by one (line 25-26).
Hence exactly the children that are not rightmost cause an incrementation of $n_T$ 
during the for-loop, the variable stores the correct value at the begin of the next iteration.
\end{proof}

Next, we will explain lines 25-28 of Algorithm \ref{alg:edgemin-st}.
As the node of each edge-reduced DBG must have at least 1 outgoing edge,
the number of edges in such a graph must be greater than or equal to its number of nodes.
This means that there are at least $|\sufftrie[k](S)|$ edges in the edge-reduced DBG of order $k$.
Furthermore, the number of nodes per level is increasing from top to bottom 
($|\sufftrie[k](S)| \leq |\sufftrie[k+1](S)|$ for all $k$). When it is clear
that the next level will contain $n_T \geq m^*$ nodes, the algorithm terminates
and returns $k^*$. This is correct because the current level then must
have at least $n_T \geq m^*$ edges (each node in the next level has at least one incoming
Weiner link) and each further level must have at least $n_T$ nodes and thus at least $m^*$ edges.

\begin{theorem}\label{thm-time complexity}
Given the rotation-trie $\sufftrie(S)$ and a zero-initialized bitvector $F$,
Algorithm \ref{alg:edgemin-st} has a worst-case time complexity of $O(m^*)$.
\end{theorem}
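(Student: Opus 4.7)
The plan is to bound the total running time by a constant factor times the number of queue pushes, and in turn to bound the pushes by $O(m^{*})$ via the termination check at line~27.

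For the push count, I would apply Theorem~\ref{thm-correctness} (property~\ref{invariant3}): the counter $n_{T}$ equals $|\sufftrie[k](S)|$ at the start of iteration~$k$, and it is incremented exactly once for each non-rightmost-sibling push (lines~25--27). Since the algorithm returns as soon as $n_{T} \geq m^{*}$, the total number of non-rightmost-sibling pushes over the entire execution is at most $m^{*}-\sigma$. Every branching node in the processed portion of $\sufftrie(S)$ has at least two children, at least one of which is non-rightmost; so the number of branching nodes is at most $m^{*}-\sigma$. Consequently the number of rightmost-sibling pushes (exactly one per branching node) is also at most $m^{*}-\sigma$. Adding the $\sigma$ initial pushes at line~4, the total push count is $O(m^{*})$.

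For the per-push work, the scan at lines~8--12 and the tests/updates at lines~14--22 do $O(1)$ work per $\phi \in Q_{k}$, contributing $O(\sum_{k} |Q_{k}|) = O(m^{*})$ in total by the above. The remaining cost is the Weiner-link iteration at lines~23--29, which performs $O(d(\phi))$ work per $\phi$, where $d(\phi)$ is the number of outgoing Weiner links of~$\phi$. I would charge each traversal to the unique target $\varphi$ it visits at level~$k{+}1$; each $\varphi$ is charged at most once across the whole algorithm. Targets with siblings are pushed and thus already counted. A non-sibling (``wasted'') target~$\varphi$ has a unary parent $\psi$ at level~$k$, and $\psi$ is necessarily the Weiner image of the branching parent $\pi$ of $\phi$ under the prepended character $l(\varphi)[1]$; a charging argument then associates each wasted target with a specific branching node bounded by Step~1, so the total number of wasted traversals is also $O(m^{*})$.

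The main technical obstacle is this last charging step: a branching trie node $\pi$ may have many Weiner images, and any of them could be unary, so one must control the total number of such unary images without incurring a factor depending on out-degrees or the alphabet size. The key structural fact to exploit is that a wasted target $\varphi$ is fully determined by its unary parent~$\psi$, and $\psi$ in turn is determined by the pair consisting of its branching Weiner-source~$\pi$ (already counted among the $O(m^{*})$ branching nodes) and the prepended character. Once the bookkeeping that turns this observation into an $O(m^{*})$ bound is carried out, combining both steps yields the claimed worst-case time complexity.
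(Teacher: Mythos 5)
Your decomposition of the running time into queue pushes plus Weiner-link traversals is sound, and the bound on pushes via the $n_T$ counter and the branching-node count is essentially right. The genuine gap is exactly where you flag it: bounding the ``wasted'' Weiner-link targets $\varphi$ (those generated but never enqueued because they have no siblings). Your proposed charging does not close this. Sending a wasted $\varphi$ to its unary parent $\psi$ is injective, and sending $\psi$ to the pair $(\pi,c)$ of branching Weiner-source and prepended character is also injective; but that only bounds the count by $(\text{number of branching nodes})\cdot\sigma = O(m^*\sigma)$, since a single branching node can in principle have $\Theta(\sigma)$ unary Weiner images. You acknowledge this and defer the fix to ``bookkeeping,'' but that bookkeeping is the whole problem: there is no local refinement of the $(\pi,c)$-charging that eliminates the $\sigma$ factor.

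The paper resolves this with a qualitatively different, global argument (in bounding the set called $M_4$ there). Observe that once a no-sibling $\varphi$ is generated it is never pushed onto $Q$, so the algorithm never generates \emph{its} Weiner links. Hence the unique node with label $c^1 l(\varphi)$ is never placed in $M$, and inductively none of the nodes with label $c^q\cdots c^1 l(\varphi)$ is either. In other words, no-sibling targets block each other along left-extension chains. This makes the map sending each no-sibling $\varphi \in M$ (at level $\leq k$) to the lexicographically smallest node $\theta \in \sufftrie[k](S)$ whose label has $l(\varphi)$ as a suffix \emph{injective}, giving a bound of $|\sufftrie[k](S)| < m^*$ outright, with no $\sigma$ factor. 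This chain/blocking observation is the missing idea, and it cannot be derived from the per-edge charging your proposal attempts; without it the proof does not go through.
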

\begin{proof}
We will use an amortized analysis to show the theorem. It is assumed
that the rotation-trie $\sufftrie(S)$ is given and that the following 
operations are supported in constant time:
\begin{itemize}
\item determine the (size of the) parent of a node (lines 10--11)
\item generate Weiner links (lines 17 and 23)
\item check whether a node $\varphi$ has siblings (lines 18 and 24): 
\item if a node $\varphi$ has siblings, check whether $\varphi$ is
the rightmost sibling (line 25)
\end{itemize}
Let $M$ be the set of all nodes $\varphi \in \sufftrie(S)$ such that
there is a Weiner link $(\phi, \varphi, c)$ generated by 
Algorithm \ref{alg:edgemin-st}.
It is clear that the run-time of the algorithm is proportional to $|M|$.
Let $m^*$ and $k$ be the values of the variables upon termination of 
Algorithm \ref{alg:edgemin-st}. We will show that $|M|\leq 4m^*$.
It is clear that $M \subseteq \sufftrie[\leq k+1](S)$, because
the algorithm terminates in the $k$-th iteration (in which nodes of 
the $k+1$ level of $\sufftrie(S)$ are generated). We partition $M$ into four 
disjoint subsets $M_1,\dots,M_4$ and show that each subset 
is of size $\leq m^*$. The definition of the subsets is depicted in 
Table \ref{partition}.
\begin{table}
\begin{tabular}{|c|c|c|c|}
\hline 
$\varphi \in M$ & $\varphi$ has a right sibling & $\varphi$ is rightmost sibling & $\varphi$ has no sibling \\
\hline 
$\varphi \in M \cap \sufftrie[\leq k](S)$ & \multirow{2}{*}{$M_1$} & $M_2$ & $M_4$ \\ 
\cline{1-1}\cline{3-4}
$\varphi \in M \cap \sufftrie[k+1](S)$ &  & \multicolumn{2}{|c|}{$M_3$} \\ 
\hline 
\end{tabular}
%oder
%\begin{align*}
%M_1 &= \{\varphi \in M \mid \varphi \text{ has a right sibling}\}\\
%M_2 &= \{\varphi \in M \mid \varphi \in \sufftrie_{\leq k} \text{ and } \varphi \text{ is a rightmost sibling}\}\\
%M_3 &= \{\varphi \in M \mid \varphi \in \sufftrie_{k+1} \text{ and } \varphi \text{ has no right sibling}\}\\
%M_4 &= \{\varphi \in M \mid \varphi \in \sufftrie_{\leq k} \text{ and } \varphi \text{ has no sibling}\}\\
%\end{align*}
\caption{partition of $M$}
\label{partition}
\end{table}
The size of $M_1$ is exactly $m^*$ because of lines 25-28 of 
Algorithm \ref{alg:edgemin-st}. Each node in $M_2$ has at least one left 
sibling and this sibling is in $M_1$, hence $|M_2| \leq |M_1| = m^*$.
Next, we show that $|M_3| \leq m^*$.
Since $n_T = |\sufftrie[k](S)|$ at the end of the $(k-1)$th iteration
and the algorithm did not terminate in this iteration (i.e.\ $n_T < m^*$), 
it follows that $|\sufftrie[k](S)| < m^*$. Each node in $\sufftrie[k](S)$ has 
exactly one child $\varphi$ that has no right sibling (either $\varphi$
has no siblings or $\varphi$ is the rightmost sibling). So the set 
$\{\varphi \mid \varphi \in \sufftrie[k+1](S) \text{ and } \varphi \text{ has no right sibling}\}$ has a size $< m^*$. Since this is a superset of $M_3$, it 
follows that $|M_3| < m^*$.

Finally, we show that $|M_4| \leq |\sufftrie[k](S)|$ (the reader should be aware
of the fact that this is by no means obvious). The theorem then follows from
$|\sufftrie[k](S)| < m^*$, which was shown above. With each node $\varphi \in M_4$,
we associate the lexicographically smallest node $\theta\in \sufftrie[k](S)$
such that $l(\varphi)$ is a suffix of $l(\theta)$. We claim that if 
$\varphi \in M_4$ is associated with $\theta\in \sufftrie[k](S)$, then there
is no other node $\varphi'\in M_4$ %(i.e.\ $\varphi' \neq \varphi$) 
that is associated with $\theta$. 
To prove the claim, suppose to the contrary that there are two 
distinct nodes $\varphi,\varphi'\in M_4$ which are associated with $\theta$.
So both $l(\varphi)$ and $l(\varphi')$
are suffixes of $l(\theta)$. W.l.o.g.\ we may assume that
$|l(\varphi)| < |l(\varphi')|$ because $\varphi \neq \varphi'$.
It follows as a consequence that $l(\varphi)$ is a proper suffix of 
$l(\varphi')$, hence $l(\varphi') =  c^q\dots c^1l(\varphi)$ for a $q\geq 1$ 
and $c^1,\dots ,c^q \in \Sigma$. Let $\varphi_i$ be the node with label
$c^i\dots c^1l(\varphi)$ for $1\leq i \leq q$. Note that $\varphi'=\varphi_q$.
Since $\varphi$ has no sibling, 
it will not be pushed onto the queue $Q$. Therefore, the Weiner link 
$(\varphi,\varphi_1,c_1)$ will not be generated and thus $\varphi_1$ will not 
be pushed onto the queue $Q$ either. By iterating this argument, we 
conclude that the Weiner link $(\varphi_{q-1},\varphi_q,c_q)$ will not 
be generated. So $\varphi' \notin M$ and this implies $\varphi' \notin M_4$.
This contradiction shows that distinct nodes $\varphi,\varphi'\in M_4$ 
are associated with distinct nodes $\theta',\theta\in \sufftrie[k](S)$.
Since each node $\varphi \in M_4$ is associated with a node
$\theta\in \sufftrie[k](S)$, it follows $|M_4| \leq |\sufftrie[k](S)|$. 
\end{proof}

%%%% ALGORITHM USING FM-INDEX %%%%%%%%%%%%%%%%%%%%%%%%%%%%%%%%%%%%%%%%%%%%%%%%%

\Section{Implementation with an FM-index}
To implement Algorithm \ref{alg:edgemin-st}, we need a framework that (1)
allows us to traverse a rotation-trie level by level in a top-down fashion and 
(2) supports the operations listed in the preceding section (in the
run-time analysis of Algorithm \ref{alg:edgemin-st}).
The work of Beller et al.\ \cite{BEL:BER:OHL:2012} provides such a framework
(although this is non-obvious at first sight). To be precise, they developed 
an algorithm that enumerates all LCP-intervals in a top-down manner; see also
\cite[Algorithm 7.19]{OHL:2013}. Their algorithm uses the wavelet tree of
the Burrows-Wheeler transform of string $S$ \cite{BUR:WHE:1994}.

Recall that the Burrows-Wheeler transform computes the string $\LCol$ for a 
null-terminated string $S$ of length $n$ as follows:
\begin{enumerate}
\item Form a conceptual matrix $M'$ whose rows are the rotations (cyclic shifts)
  of $S$.
\item Compute the matrix $M$ by sorting the rows of $M'$ lexicographically. 
\item Output the last column $\LCol$ of $M$.
\end{enumerate}
%Let $\FCol$ be the first column in the matrix $M$.
%The function $\LF : \{1,\dots, n\} \to \{1,\dots, n\}$ is defined as follows:
%If $\LCol[i]=c$ is the $k$-th occurrence of character
%$c$ in $\LCol$, then $\LF(i) = j$ is the index so
%that $\FCol[j]$ is the $k$-th occurrence of $c$ in $\FCol$.
The Burrows-Wheeler transform can be generalized to multiple strings.

If $\omega$ is a substring of $S$, then the $\omega$-interval is the largest
interval $[i,j]$ of $M$ such that $\omega$ is a prefix of the rotations of $S$
in the interval $[i,j]$. In what follows, for each character $c$, 
$C[c]$ is the overall number of occurrences of 
characters in $\BWT[1,n]$ that are strictly smaller than $c$. 
Given the $\omega$-interval $[lb,rb]$ and a character $c$, 
the $c\omega$-interval $[i,j]$ can be computed 
by $i= C[c] + rank_c(\BWT,lb-1)+1$ and $j= C[c] + rank_c(\BWT,rb)$,
where $rank_c(\BWT,lb-1)$ returns the number of occurrences of character 
$c$ in the prefix $\BWT[1,lb-1]$
(we have $i\leq j$ if $c\omega$ is a substring of $S$; otherwise $i > j$);
see \cite{FER:MAN:2000} for details.
The (balanced) \emph{wavelet tree} \cite{GRO:GUP:VIT:2003} of $\BWT$ 
supports such a backward search step in $O(\log \sigma)$ time, where $\sigma$
is the size of the alphabet.
Backward search can be generalized on the wavelet tree as 
follows: Given an $\omega$-interval $[lb,rb]$, a slight modification of
the procedure $\getintervals(lb,rb)$ described in \cite{BEL:GOG:OHL:SCH:2013}
returns the list $[(c,[i,j]) \mid c\omega \mbox{ is a substring of } S
\mbox{ and } [i,j] \mbox{ is the }$ $c\omega\mbox{-interval}]$, 
where the first component of an element $(c,[i,j])$ is a character.
The worst-case time complexity of the procedure $\getintervals$ is
$O(occ + occ \cdot \log (\sigma/occ))$, where $occ$ is the number of elements 
in the output list; see \cite[Lemma 3]{GAG:NAV:PUG:2012}. Thus, the time 
per generated interval is $O(\log \sigma)$.

In Algorithm \ref{alg:edgemin-fm}, a node $\phi$ is represented by 
the interval $[lb(\phi),rb(\phi)]$. Since for each $k$ the intervals of 
all nodes in $\sufftrie[k](S)$ (the nodes of level $k$) are a partition 
of the interval $[1,n]$, we can represent $\sufftrie[k](S)$ by 
marking all the right boundaries of the node intervals in a bitvector of size $n$. 
For technical reasons, we use a bitvector $B$ of size $n + 1$ in
Algorithm \ref{alg:edgemin-fm}, in which $B[1]$ and $B[n+1]$ are always set to
$1$. 
Given a level $k$, $B[rb(\phi)+1]$ is set to 1 for all nodes $\phi \in \sufftrie[k](S)$. 
Fix some node $\zeta \in \sufftrie[k-1](S)$ and let $\phi$ $\in \sufftrie[k](S)$ be its 
rightmost child. Since $rb(\zeta)=rb(\phi)$, $B[rb(\zeta)+1]=1$ must hold at both levels 
$k-1$ and $k$. In other words, the 
representation of the next higher level can be obtained by just adding ones 
for the boundaries of new nodes (line 14). Algorithm \ref{alg:edgemin-fm} 
maintains the following invariant: when the body of the for-loop
starting at line 8 is executed for a value $k$ and line 23 is reached,
then the boundaries of all nodes in level $k$ are marked in $B$.

Given a node $\phi$ in level $k$, 
all its Weiner links $(\phi,\varphi,c)$ can be generated by
the procedure call $\getintervals(lb(\phi),rb(\phi))$ because 
$[lb(\phi),rb(\phi)]$ is the $l(\phi)$-interval and the call
returns the set
$M=\{ (c,[i,j]) \mid c\in \Sigma \mbox{ and } [i,j] \mbox{ is the $cl(\phi)$-interval}\}$.
The set $M$ contains just one element if and only if there is a 
unique Weiner link $(\phi,\varphi,c)$.  The following statements hold true:
\begin{enumerate}
\item Node $\varphi$, represented by the interval $[lb(\varphi),rb(\varphi)]$,
has no siblings if and only if $B[lb(\varphi)]=1$ and $B[rb(\varphi)+1]=1$.
\item If $\varphi$ has a sibling, then $\varphi$ is the rightmost sibling
if and only if $B[rb(\varphi)+1]=1$.
\end{enumerate}
To prove (1), consider the parent node $\psi$ of $\varphi$. Since
$\psi$ is a node at level $k$, we have $B[lb(\psi)]=1$, $B[rb(\psi)+1]=1$,
and $B[p]=0$ for all $p$ with $lb(\psi) < p < rb(\psi)+1$. Furthermore,
$lb(\psi) \leq lb(\varphi) < rb(\varphi)+1 \leq rb(\psi)+1$ because
$\varphi$ is a child of $\psi$. If $\varphi$ has no siblings, then
$lb(\varphi) = lb(\psi)$ and $rb(\varphi) = rb(\psi)$, which implies
$B[lb(\varphi)]=1$ and $B[rb(\varphi)+1]=1$. If $\varphi$ has siblings, then
$lb(\varphi) \neq lb(\psi)$ or $rb(\varphi) \neq rb(\psi)$.
Hence $B[lb(\varphi)]=0$ or $B[rb(\varphi)+1]=0$. The simple proof of 
(2) is left to the reader.

Algorithm \ref{alg:edgemin-fm} must also be able to determine the parent of 
a node. That is why it traverses the nodes of a level in 
lexicographical order (i.e.\ from left to right). To this end, it
employs $\sigma$ queues (a queue $Q_c$ for each $c\in \Sigma$).
Algorithm \ref{alg:edgemin-fm} maintains the following invariant:
before the body of the for-loop starting at line 8 is executed for a value $k$,
the queue $Q_c$ stores the intervals $[lb(\phi),rb(\phi)]$ 
of all nodes $\phi \in \sufftrie[k](S)$, that have siblings and whose labels starts with $c$, 
in lexicographical order. 
This is certainly true for $k=1$ because of the initialization of the queues 
in lines 6--7: the procedure call $\getintervals(1,n)$ returns the set 
$\{ (c,[i,j]) \mid c\in \Sigma \mbox{ and } [i,j] \mbox{ is the $c$-interval}\}$ 
and each queue contains just one element. Moreover, if it is for $k$,
it remains true for $k+1$ because the foreach-loop starting at line 25 
processes the characters of $\Sigma$ in alphabetical order.
Since a queue $Q_c$ contains only nodes with siblings and these are in 
lexicographical order, the leftmost child of a node is encountered
first in the foreach-loop starting at line 11, followed by some or no 
'middle children', followed by the rightmost child. 
%, followed by the children of the next node.
When the first child is encountered, its left boundary, which is also the left 
boundary of its parent, is stored in the variable $last$ (line 17).
The right boundary of the parent is identical to right boundary of the 
rightmost child. Therefore, the rightmost child $\phi$ is reached
when $B[rb(\phi) +1]=1$ (line 18).
At this moment, we know the left and right boundary of the parent, 
which gives us its size (line 20). In line 22, the variable $last$ is reset 
(to ensure that it will be correctly set to a
new left boundary, when the next leftmost child is encountered).

It follows as in the run-time analysis of Algorithm \ref{alg:edgemin-st}
that, given an FM-Index and two zero-initialized bitvectors, Algorithm \ref{alg:edgemin-fm} has a worst-case run-time of 
$\mathcal{O}(m^*\log \sigma)$. This improves the complete run-time from $O(n^2)$ (construction of the
rotations-trie and Algorithm \ref{alg:edgemin-st}) to $O(n \log \sigma)$ (construction of the FM-index and Algorithm \ref{alg:edgemin-fm}).

\newpage

\begin{algorithm}[H]
\scriptsize
\KwData{FM-index of a string $S$ with alphabet $\Sigma$ supporting the $\getintervals$-function, $\C$-array of $S$,
	bitvector $B$ of size $|S|+1$ initialized with zeros, bitvector $F$ of size $|S|$ initialized with zeros.}
\KwResult{Order $k$ such that the edge-reduced de Bruijn graph ${\tilde G}_k(S)$ has the minimum number of edges.}
\BlankLine
\BlankLine
\DontPrintSemicolon
\parbox{8em}{$B[1] \gets 1$} ~ $B[|S|+1] \gets 1$ \tcp*[r]{boundaries of the root node}
\parbox{8em}{$n_T \gets \sigma$} ~ $m \gets |S|$ \tcp*[r]{node counter and edge counter}
\parbox{8em}{$k^* \gets 1$} ~ $m^* \gets |S|$ \tcp*[r]{order with minimum number of edges $m^*$}
$fusible \gets 0$  \tcp*[r]{inherited fusions}
$last \leftarrow \perp$
\BlankLine
\ForEach{$(c,[i,j]) \in \getintervals(1,|S|)$}{
	initialize a queue $Q_c$ with element $[i,j]$
}
\BlankLine
\For{$k \gets 1 ~ \KwTo ~ |S|-1$}{
	$m \gets m - fusible$ \tcp*[r]{establish new inherited fusions}
	$fusible \gets 0$ \;
	\BlankLine
	\ForEach(\tcp*[f]{reverse old fusions and mark boundaries}){$c \in \Sigma$}{
		\ForEach{$[lb,rb] \in Q_c$}{
			\If{$B[rb+1] = 0$}{
				$B[rb+1] \gets 1$ \;
				\textcolor{red}{write $rb+1$ to an external buffer} \;
				\If(\tcp*[f]{save left boundary of parent node}){$last = \perp$}{
					$last \gets lb$ \;
				}
			}
			\Else {
				\If{$F[rb] = 1$}{
					$m \gets m + (rb - last)$ \;
					$F[rb] \gets 0$ \;
				}
				$last \gets \perp$
			}
		}
	}
	\BlankLine
	\ForEach{$c \in \Sigma$}{
		$size[c] \gets |Q_c|$ \;
	}
	\BlankLine
	\ForEach(\tcp*[f]{in alphabetical order}){$c \in \Sigma$}{
		\For{$q \gets 1 ~ \KwTo ~ size[c]$}{
			$[lb,rb] \gets $ first element of $Q_c$ \;
			pop first element of $Q_c$ \;
			\BlankLine
			$M \gets \getintervals( lb, rb )$ \;
			\If(\tcp*[f]{reclassify edges}){$M$ contains only one element $(c,[i,j])$}{
				\If(\tcp*[f]{node has no siblings}){$B[i] = 1 $ and $B[j+1] = 1$}{
					$m \gets m - (rb - lb)$ \;
				} \Else(\tcp*[f]{possible fusion in next order}) {
					$fusible \gets fusible + (rb - lb)$
				}
				$F[rb] \gets 1$ \;
			}
			\BlankLine
			\ForEach(\tcp*[f]{enqueue new nodes}){$(c,[i,j]) \in M$}{
				\If(\tcp*[f]{node has siblings}){$B[i] = 0 $ or $ B[j+1] = 0$}{
					\If(\tcp*[f]{not rightmost sibling, update node counter}){$B[j+1] = 0$}{
						$n_T \gets n_T + 1$ \;
						\If{$n_T \geq m^*$}{
							\Return $k^*$ \;
						}
					}
					push $[i, j]$ onto the back of $Q_c$ \;
				}
			}
		}
	}
	\BlankLine
	\If{$m < m^*$}{
		$k^* \gets k$ \;
		$m^* \gets m$ \;
		\textcolor{red}{clear the external buffer} \;
	}
}
\Return $k^*$ \;

\caption{De Bruijn graph edge minimization using a FM index.}
\label{alg:edgemin-fm}
\end{algorithm}

%%%% RETAINING THE NODE BOUNDS FOR BEST K %%%%%%%%%%%%%%%%%%%%%%%%%%%%%%%%%%%%%

\SubSection{Retaining the node boundaries of a solution}

So far we are able to compute the order $k$ such that the edge-reduced DBG ${\tilde G}_k(S)$
contains the minimum amount of edges. For typical
applications, like e.g.\ compressed DBGs, we would like to have some information about the nodes
themselves. This information typically consists of a bitvector $B$ where the boundaries of each node
in the corresponding level in the BWT are marked. We could of course compute these boundaries by another
breadth-first traversal of the trie until level $k$, but this seems unnecessary as Algorithm
\ref{alg:edgemin-fm} already performs kind of a breadth-first traversal.

Indeed, note that Algorithm \ref{alg:edgemin-fm} computes the extra information during its execution
by marking the left boundaries of each node. Unfortunately, once the best level of the trie has been processed,
Algorithm \ref{alg:edgemin-fm} writes additional markings in $B$ until the algorithm terminates. To
retain the node boundaries of the solution, we somehow have to get rid of the additional markings. This is the
point where the red marked commands of Algorithm \ref{alg:edgemin-fm} (lines 15 and 46) come into play:
Once we have processed the best level, line 43 is no longer executed, so the external buffer contains positions of
all ``unwanted'' markings in $B$. Thus, to obtain the node boundaries of the solution, we just have to clear
all markings at positions in the external buffer after Algorithm \ref{alg:edgemin-fm} terminates. As Algorithm
\ref{alg:edgemin-st} has an output-sensitive run-time of $O(m^*)$, the external buffer has a
size of at most $O(m^*)$, and therefore computing the node boundaries requires $O(m^*)$ time.

%%%% CONNECTIONS TO TUNNELING %%%%%%%%%%%%%%%%%%%%%%%%%%%%%%%%%%%%%%%%%%%%%%%%%

\Section{Connections to tunneling}

To give an application of the edge minimization approach in de Bruijn graphs, we show
connections to a new BWT compression technique called ``tunneling''.The idea of tunneling is that,
given adjacent rotations in $M$ preceded by equal strings, those equal preceding strings
can be ``tunneled'', i.e.\ fuse the equal strings
to just one string and thereby reduce the size of the BWT. The technique was presented in
\cite{BAI:2018} and extended to Wheeler graphs in \cite{ALA:GAG:NAV:BEN:2019}. The 
technique relies on equal preceding strings, which we will call prefix intervals.%
\footnote{Older publications use the term block, but the term prefix interval seems
	clearer to us because a prefix interval is an interval with equal preceding strings.}

\begin{defi} \label{def:lfmapping}
Let $S$ be a string of length $n$, $\LCol$ be its corresponding BWT. The LF-mapping is a
permutation of integers in range $[1,n]$ defined as follows:
\[ \LF[i] \coloneqq \C[\LCol[i]] + \rank_{\LCol[i]}(\LCol, i) \]
We write $\LF^x[i]$ for the x-fold application of $\LF$, i.e.\,
${\LF^x[i] \coloneqq \underbrace{\LF[\LF[\dotsm\LF[}_{x \text{ times}}i]\dotsm]]}$,
\\[-\baselineskip]define $\LF^0[i] \coloneqq i$ and the inverse of $\LF$ as $\LF^{-1}$.
\end{defi}

\begin{defi} \label{def:prefixintervals}
Let $S$ be a string of length $n$ and let $\LCol$ be its BWT. A prefix interval is a
pair of an integer $w \geq 0$ and an interval $[i,j] \subseteq [1,n]$ (a $\langle w, [i,j] \rangle$ prefix interval) such that
\[
	\LCol[\LF^x[i]] = \LCol[\LF^x[i+1]] = \dotsm = \LCol[\LF^x[j]] \text{ for all } -1 \leq x < w
\]
\end{defi}

Definition \ref{def:prefixintervals} is a combination of definitions from \cite{BAI:2018} and
\cite{ALA:GAG:NAV:BEN:2019}, because both together unleash the full power of tunneling a BWT.
Also note that varying $x$ from $-1$ to $w-1$ ensures that lexicographically adjacent rotations
begin with the same character. Next, we will specify what the fusion of a prefix interval means.

\begin{defi} \label{def:tunneling}
Let $S$ be a string of length $n$ and let $\LCol$ be its BWT. Furthermore, let $\langle w_1,[i_1,j_1]\rangle,
\dotsm,\langle w_k,[i_k,j_k]\rangle$ be a set of disjoint prefix intervals in $\LCol$, that is,
the sets $N_l \coloneqq \{ ~ \LF^x[y] ~ \mid ~ i_l \leq y \leq j_l, 0 \leq x \leq w_l ~ \}$ are pairwise disjoint.

Given two bitvectors $\dout$ and $\din$ of size $n$ initialized with ones, tunneling terms the following procedure:
\begin{enumerate}[1.\hspace{-2em}]
\item	\hspace{2em}For all $1 \leq l \leq k$, set \newline
	\phantom{3.~~~~}$\din[i_{l}+1] = \dotsm = \din[j_{l}] = 0$ and
	$\dout[\LF^{w_l}[ i_{l} + 1 ]] = \dotsm = \dout[\LF^{w_l}[ j_{l} ]] = 0$.
\item	\hspace{2em}For all $1 \leq l \leq k$, mark \newline
	\phantom{3.~~~~}$\LCol[ \LF^x[y] ]$, $\dout[ \LF^x[y] ]$ and $\din[ \LF^{x+1}[y] ]$ for all $i_l < y \leq j_l$ and $0 \leq x < w_l$.
\item	\hspace{2em}\label{enumitem:tunnelingremoval}
	remove all marked entries from $\LCol$, $\dout$ and $\din$.
\end{enumerate}
\end{defi}

Tunneling indeed ``fuses'' the strings of a prefix interval: only the uppermost string (uppermost row) of
the prefix interval remains unchanged, all other strings are removed by item \ref{enumitem:tunnelingremoval}
of Definition \ref{def:tunneling},
see Figure \ref{fig:dbg-bwt-fusion} for an example. Analogously to the LF-mapping in a normal BWT,
the generalized LF-mapping in a tunneled BWT is given by the formula
\[
	i \gets \select_{1}\left( \dout, \rank_{1}\left( \din, \C[\LCol[i]] + \rank_{\LCol[i]}( \LCol, i ) \right) \right)\text{.}
\]
Navigating in a tunneled BWT has one more specialty: if the start of a tunnel is reached, a tunnel entry offset has to be saved,
which is used at the end of the tunnel to jump back to the correct entry, see \cite{BAI:2018} and \cite{ALA:GAG:NAV:BEN:2019}
for more details.

We next want to establish a connection between edge-reduced de Bruijn graphs and tunneling. It shows
 that the paths of nodes with fusible edges in a DBG coincide with prefix intervals in a BWT.

\begin{theorem} \label{thm:dbg-fps-bwt-pis}
Let $G_k(S) = (\kmers, E)$ be a DBG of some string $S$, and let $\LCol$ be the BWT of $S$.
Let $p_1 = (x_{1,1}, x_{1,2}, \ldots, x_{1,w_1}), \ldots,p_m = (x_{m,1},\ldots,x_{m,w_x})$ be all
longest paths in $G_k(S)$ such that
$x_{i,j+1}$  is the only successor of  $x_{i,j}$  and  $x_{i,j}$  is the only predecessor of  $x_{i,j+1}$ for all $1 \leq i \leq m$ and $1 \leq j < w_i$.

\begin{enumerate}[1.\hspace{-2em}]
\item	\hspace{2em}Each path $p=(x_1,\ldots,x_w)$ corresponds to a prefix interval $\langle w-1,[i,j]\rangle$ in $\LCol$ \newline
	\phantom{2.~~\,} where the number of edges between nodes in $p$ equals $(w-1)\cdot( j - i + 1)$.
\item	\hspace{2em}The corresponding prefix intervals are disjoint.
\end{enumerate}
\end{theorem}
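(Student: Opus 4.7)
The plan is to exploit the correspondence among $k$-mers in $\kmers$, nodes at level $k$ of $\sufftrie(S)$, and $\omega$-intervals of the BWT. For each $k$-mer $x_t$ on a fusible path $p=(x_1,\ldots,x_w)$, let $\phi_t$ denote the unique trie node with $l(\phi_t)=x_t$ and let $[lb_t,rb_t]$ be its BWT row interval.

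First I would rephrase the fusibility of every edge $x_{t-1}\to x_t$ using Lemma~\ref{lem-key}: $\phi_t$ admits a unique Weiner link $(\phi_t,\varphi_t,c_t)$ with $c_t=x_{t-1}[1]$, the parent of $\varphi_t$ in $\sufftrie(S)$ is $\phi_{t-1}$, and $\varphi_t$ has no siblings. Translated to the BWT via backward search, uniqueness of the Weiner link forces $\LCol$ to be constantly $c_t$ on $[lb_t,rb_t]$, while the no-siblings property forces the interval of $\varphi_t$ to coincide with $[lb_{t-1},rb_{t-1}]$. Consequently $\LF$ restricts to a monotone bijection from $[lb_t,rb_t]$ onto $[lb_{t-1},rb_{t-1}]$. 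Iterating yields $\LF^x([lb_w,rb_w])=[lb_{w-x},rb_{w-x}]$ for $0\leq x\leq w-1$; in particular, all these intervals share a common size $s$.

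For Claim~1, I would take $[i,j]\coloneqq[lb_w,rb_w]$ (so $s=j-i+1$) and verify the defining condition of the prefix interval $\langle w-1,[i,j]\rangle$ entry by entry. For $x=-1$, $\LCol[\LF^{-1}[y]]$ equals the first character $\FCol[y]$ of the rotation at row $y$, which is $x_w[1]$ on the whole interval. For $0\leq x\leq w-2$, the interval $\LF^x([i,j])=[lb_{w-x},rb_{w-x}]$ is a trie-node interval on which $\LCol$ is constantly $c_{w-x}$ by the preceding paragraph. The edge-count claim then follows from Lemma~\ref{lem-edgeWL}: the multiplicity of the edge $x_{t-1}\to x_t$ equals the number of leaves in the subtree of $\phi_t$ whose labels end in $c_t$, which by the uniqueness of the Weiner link is $|\phi_t|=s$; summing over the $w-1$ edges gives $(w-1)\cdot s$.

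For Claim~2, I would observe that the set $N$ attached to the prefix interval of $p$ equals $\bigcup_{t=1}^{w}[lb_t,rb_t]$, i.e.\ the union of the trie intervals of the $k$-mers on $p$. Since distinct nodes at level $k$ of $\sufftrie(S)$ have disjoint intervals, two paths with overlapping $N$'s must share a $k$-mer. But the longest fusible path through any given $k$-mer is unique: fusibility forces both the fusible predecessor and the fusible successor within the path to be uniquely determined, so the maximal two-sided extension from any shared $k$-mer is unambiguous and the two paths coincide. The main obstacle is the careful identification ``unique Weiner link with a no-siblings target $\Leftrightarrow$ $\LF$ bijection between the two trie intervals''; once this dictionary is in place, everything reduces to a level-by-level tracking of intervals.
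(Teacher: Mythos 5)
Your proof is correct, and the key mechanism is the same as the paper's: along a fusible path $\LF$ restricts to a height-preserving bijection from the interval of each $k$-mer onto that of its predecessor, so iterating from $[lb_w,rb_w]$ exhibits the prefix-interval property and gives the edge count $(w-1)(j-i+1)$; disjointness follows because a shared row would force a shared $k$-mer and hence an identical maximal path. The difference is in the route you take to establish the bijection. The paper argues directly on BWT matrix intervals (``$x$ is the only predecessor of $y$, so $\LCol$ is constant on $[i_y,j_y]$; $y$ is the only successor of $x$, so the intervals have equal height and $\LF$ maps one onto the other''), never mentioning the trie. You instead invoke Lemma~\ref{lem-key} to recast fusibility as ``unique Weiner link with a sibling-free target'' and then translate that back into the interval picture, and you pull the edge count from Lemma~\ref{lem-edgeWL} rather than from the ``each edge is a text position'' observation. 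Your version is more formal and leans on the machinery the paper already built, which makes the $\LF$-bijection step airtight; the paper's version is shorter and self-contained (and incidentally has a minor slip, asserting ``exactly $j_y-i_y$ edges'' where $j_y-i_y+1$ is meant, which your Lemma~\ref{lem-edgeWL}-based count avoids). For Claim~2 your explicit identification of $N$ with $\bigcup_t[lb_t,rb_t]$ and the uniqueness of the maximal fusible path through a node is exactly the argument the paper compresses into one sentence.
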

\begin{proof}~

\begin{enumerate}
\item	Each node $x$ of a path spans a consecutive $x$-interval $[i,j]$ in the BWT matrix $M$, as rotations prefixed with $x$
	are adjacent in the matrix $M$. Let $x$ and $y$ be two adjacent nodes of $p$ with intervals
	$[i_x,j_x]$ and $[i_y,j_y]$. Because $x$ is the only predecessor of $y$ and $y$ is the only successor of $x$,
	the intervals have the same height ($j_x - i_x = j_y - i_y$) and the entries in the BWT interval of the successor
	$y$ are equal ($\LCol[i_y] = \ldots = \LCol[j_y]$), so $[\LF[i_y],\LF[j_y]] = [i_x,j_x]$ must hold.
	Because each edge from $x$ to $y$ corresponds to a text position where the $k$-mers overlap, there must be exactly
	$j_y-i_y$ edges between $x$ and $y$, so the path overall contains $(w-1) \cdot (j - i + 1)$ edges.
\item	As the paths are disjoint (otherwise one node would have multiple predecessors or successors),
	the intervals in $M$ and therefore the prefix intervals must be disjoint, too.
\end{enumerate}
\end{proof}

\begin{figure}[t]
\scriptsize\centering\hfill%
%%%% NORMAL %%%%%%%%%%%%%%%%%%%%%%%%%%%%%%%%%%%%%%%%%%%%%%%%%%%%%%%%%%%%%%%%%%%
\begin{tikzpicture}[y=4ex,x=2em,
	one/.style={minimum width=2em},
	two/.style={minimum width=4.8em},
	six/.style={minimum width=15.5em},
	dbgnode/.style={draw, rounded corners, minimum height=1em, minimum width=3em, fill=blue!10,inner sep=0pt},
	link/.style={-latex,black!50,dotted},
	mlink/.style={-latex,very thick,teal!30},
	mlink2/.style={-latex,very thick,teal!70}]
%DeBruijn graph
\node at (0,0) {DBG};
%nodes
\foreach \kmer/\off/\occ [count=\i from 1,evaluate=\occ as \height using 4*\occ - 1] in {
	\$A/1/1,
	AG/2/1,
	G\$/3/1,
	GG/4.5/2,
	GT/6.5/2,
	TG/8.5/2} {
	\node[dbgnode,minimum height=\height ex] (dbgn\i) at (0,-\off) {\kmer};
};
%edges
\begin{scope}[on background layer]
\foreach \src/\dest/\extrastyle in {
	1/2/link,
	2/7/link,
	3/1/link,
	4/3/link,
	5/6/link,
	6/8/mlink,
	7/9/mlink,
	8/4/mlink2,
	9/5/mlink2} {
	\draw[-latex,\extrastyle] (0.75,-\src) -- (1.5,-\src) to[out=0,in=180] (-1.5,-\dest) -- (-0.75,-\dest);
};
\end{scope}

%BWT
\node[anchor=east] at (4,0) {$\LCol$};
\node[anchor=west] at (5,0) {Rotations};

\foreach \bwentry/\kmer/\remsuff [count=\i from 1] in {
	G/\$A/GTG,
	\$/AG/TGG,
	G/G\$/AGT,
	T/GG/\$AG,
	T/GG/TGG,
	G/GT/GG\$,
	A/GT/GGT,
	G/TG/G\$A,
	G/TG/GTG} {
	\node[fill=white,minimum height=4ex,anchor=east] (bwt\i) at (4,-\i) {\bwentry};
	\node[anchor=west] (suff\i) at (5,-\i) {\kmer\textcolor{black!50}{\remsuff$\,\dotsm$}};
	\draw[-latex] (suff\i) -- (bwt\i);
}
%selected lf-mapping
\begin{scope}[on background layer]
\draw[-latex,mlink] (bwt8.west) -- ($(bwt8.west) - (.75,0)$) to[out=180,in=210] ($(suff6.south west) + (0,4pt)$);
\draw[-latex,mlink] (bwt9.west) -- ($(bwt9.west) - (.75,0)$) to[out=180,in=210] ($(suff7.south west) + (0,4pt)$);
\draw[-latex,mlink2] (bwt4.west) -- ($(bwt4.west) - (.75,0)$) to[out=180,in=150] ($(suff8.north west) - (0,4pt)$);
\draw[-latex,mlink2] (bwt5.west) -- ($(bwt5.west) - (.75,0)$) to[out=180,in=150] ($(suff9.north west) - (0,4pt)$);
\end{scope}
\end{tikzpicture}%
%%%% EDGE REDUCED %%%%%%%%%%%%%%%%%%%%%%%%%%%%%%%%%%%%%%%%%%%%%%%%%%%%%%%%%%%%%
\begin{tikzpicture}[y=4ex,x=2em,
	one/.style={minimum width=2em},
	two/.style={minimum width=4.8em},
	six/.style={minimum width=15.5em},
	dbgnode/.style={draw, rounded corners, minimum height=1em, minimum width=3em, fill=blue!10,inner sep=0pt},
	link/.style={-latex,black!50,dotted},
	mlink/.style={-latex,ultra thick,teal!30},
	mlink2/.style={-latex,ultra thick,teal!70}]
%edge reduced DeBruijn graph
\node at (0,0) {DBG};
%nodes
\foreach \kmer/\off/\occ [count=\i from 1,evaluate=\occ as \height using 4*\occ - 1] in {
	\$A/1/1,
	AG/2/1,
	G\$/3/1,
	GG/4.5/2,
	GT/6.5/2,
	TG/8.5/2} {
	\node[dbgnode,minimum height=\height ex] (dbgn\i) at (0,-\off) {\kmer};
};
%edges
\begin{scope}[on background layer]
\foreach \src/\dest/\extrastyle in {
	1/2/link,
	2/7/link,
	3/1/link,
	4/3/link,
	5/6/link,
	6/8/mlink,
	8/4/mlink2} {
	\draw[-latex,\extrastyle] (0.75,-\src) -- (1.5,-\src) to[out=0,in=180] (-1.5,-\dest) -- (-0.75,-\dest);
};
\end{scope}

%BWT
\node[anchor=east] at (4,0) {$\LCol$};
\node[anchor=west] at (4,0) {$\dout$};
\node[anchor=east] at (7,0) {$\din$};
\node[anchor=west] at (7,0) {Rotations};

\foreach \bwentry/\kmer/\remsuff/\doutentry/\dinentry [count=\i from 1] in {
	G/\$A/GTG$\,\dotsm$/1/1,
	\$/AG/TGG$\,\dotsm$/1/1,
	G/G\$/AGT$\,\dotsm$/1/1,
	T/GG/\$AG$\,\dotsm$/1/1,
	\phantom{T}/GG/TGG$\,\dotsm$/\phantom{0}/0,
	G/GT/$\,\dotsm$/1/1,
	A/\phantom{GT}//0/\phantom{0},
	G/TG/$\,\dotsm$/1/1,
	\phantom{G}/\phantom{TG}//\phantom{0}/\phantom{0}} {
	\node[fill=white,minimum height=4ex,anchor=east] (bwt\i) at (4,-\i) {\bwentry};
	\node[fill=white,minimum height=4ex,anchor=west] (dout\i) at (4,-\i) {$\mathtt{\doutentry}$};
	\node[minimum height=4ex,anchor=east] (din\i) at (7,-\i) {$\mathtt{\dinentry}$};
	\node[anchor=west] (suff\i) at (7,-\i) {\kmer\textcolor{black!50}{\remsuff}};
}
%connection between f and l
\foreach \dinentry/\doutentry in {
	1/1,2/2,3/3,4/4,5/4,6/6,6/7,8/8} {
	\draw[-latex] (din\dinentry) -- (dout\doutentry);
};
%selected lf-mapping
\begin{scope}[on background layer]
\draw[-latex,mlink] (bwt8.west) -- ($(bwt8.west) - (.75,0)$) to[out=180,in=210] (suff6.south west);
\draw[-latex,mlink2] (bwt4.west) -- ($(bwt4.west) - (.75,0)$) to[out=180,in=150] (suff8.north west);
\end{scope}
\end{tikzpicture}%
\hfill%
\caption{Correspondence between a de Bruijn graph with order $k=2$ and the BWT (left), and correspondence between
	the edge-reduced de Bruijn graph and the tunneled BWT (right) for $S=AGTGGTGG\$$. The left-hand side shows
	teal-colored fusible edges between nodes in the DBG and the corresponding LF-mappings in the BWT.
	The right-hand side shows the fused edges in the DBG, as well as the LF-mapping in the tunneled BWT by
	tunneling the $\langle 2,[4,5]\rangle$ prefix interval corresponding to the fusible path.}
\label{fig:dbg-bwt-fusion}
\end{figure}
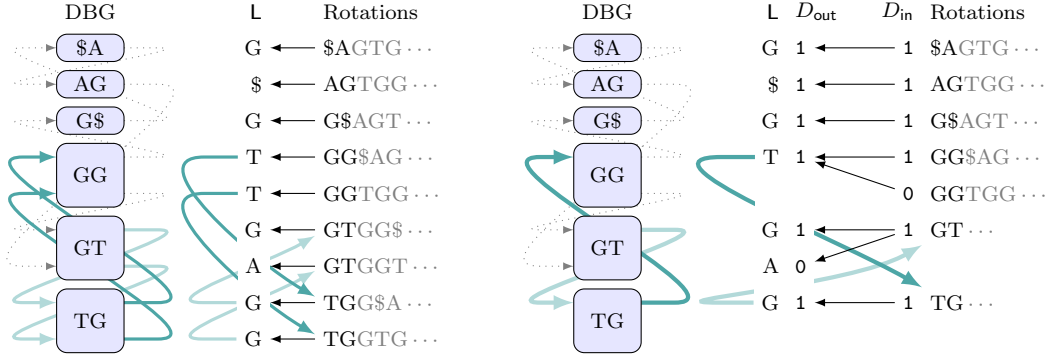

Theorem \ref{thm:dbg-fps-bwt-pis} states that edge-reducible paths in a DBG correspond to
disjoint prefix intervals in the matching BWT, see also Figure \ref{fig:dbg-bwt-fusion}. Moreover,
the length of an edge-reduced DBG and the size of a tunneled BWT, obtained by tunneling the
corresponding prefix intervals, is equal.

\begin{cor} \label{cor:erdbgedgecnttbwtlength}
Let $G_k(S) = (\kmers, E)$ be a DBG of some string $S$, and let $\LCol$ be the BWT of $S$.
Let $p_1 = (x_{1,1}, x_{1,2}, \ldots, x_{1,w_1}), \ldots,p_m = (x_{m,1},\ldots,x_{m,w_x})$ be all
longest paths in $G_k(S)$ such that
$x_{i,j+1}$  is the only successor of  $x_{i,j}$  and  $x_{i,j}$  is the only predecessor of  $x_{i,j+1}$ for all $1 \leq i \leq m$ and $1 \leq j < w_i$,
and let $\langle w_1-1,[i_1,j_1]\rangle,\ldots,\langle w_m-1,[i_m,j_m]\rangle$ be the corresponding prefix intervals in $\LCol$.

Then, the size of the tunneled BWT $\LCol$, $\dout$, $\din$ obtained by tunneling \\
${\langle w_1-1,[i_1,j_1]\rangle,\ldots,\langle w_m-1,[i_m,j_m]\rangle}$ is equal to the number of edges in the edge-reduced de Bruijn graph
$\tilde G_k(S) = (\kmers,\tilde E)$, that is, $|\LCol| = \sum_{(x,y)^m \in \tilde E} m$.
\end{cor}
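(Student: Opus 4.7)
The plan is to show both quantities equal $|S|$ minus the \emph{same} reduction, by matching the edge savings in $\tilde G_k(S)$ against the entries removed from $\LCol$ by tunneling, path by path.

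First I would observe the two baselines. By Definition \ref{def:debruijngraph}, $G_k(S)$ has exactly $|S|$ edges regardless of $k$, and the original BWT $\LCol$ has length $|S|$. All edges outside the paths $p_1,\ldots,p_m$ keep their multiplicities in $\tilde E$ by Definition \ref{def:edgereduceddbg}, and by Definition \ref{def:tunneling} all $\LCol$-entries outside the marked positions of the prefix intervals survive tunneling. So it suffices to prove that, for each $i$, the reduction in edge count contributed by the path $p_i$ equals the number of $\LCol$-entries removed when tunneling the corresponding prefix interval $\langle w_i-1,[i_i,j_i]\rangle$, and that these contributions add up without overlap.

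Next I would compute the edge-count reduction contributed by a single path. Let $p_i=(x_{i,1},\ldots,x_{i,w_i})$. By Theorem \ref{thm:dbg-fps-bwt-pis}, the intervals corresponding to the nodes along $p_i$ all share the same height $j_i-i_i+1$, so each of the $w_i-1$ consecutive edge groups in $p_i$ has multiplicity $j_i-i_i+1$, which is reduced to $1$ in $\tilde E$. This yields a reduction of
\[
(w_i-1)\bigl((j_i-i_i+1)-1\bigr)=(w_i-1)(j_i-i_i)
\]
edges. In parallel, tunneling the prefix interval $\langle w_i-1,[i_i,j_i]\rangle$ marks, by item~2 of Definition \ref{def:tunneling}, the $\LCol$-entries at positions $\LF^x[y]$ for $i_i<y\le j_i$ and $0\le x<w_i-1$; these are $(j_i-i_i)(w_i-1)$ distinct positions, all removed by item~3. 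The two numbers are identical.

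Finally I would invoke disjointness. Part~2 of Theorem \ref{thm:dbg-fps-bwt-pis} says the prefix intervals are pairwise disjoint in the sense required by Definition \ref{def:tunneling}, so the individual tunneling reductions add without collision; likewise, the paths $p_i$ are node-disjoint by construction, so the edge-count reductions also add without double counting. Summing over $i$ therefore gives
\[
|\LCol|-\bigl|\text{tunneled }\LCol\bigr|=\sum_{i=1}^{m}(w_i-1)(j_i-i_i)=\sum_{(x,y)^m\in E}m-\sum_{(x,y)^m\in\tilde E}m,
\]
and since $|\LCol|=|S|=\sum_{(x,y)^m\in E}m$ we conclude $|\LCol|=\sum_{(x,y)^m\in\tilde E}m$, as claimed. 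The only real subtlety is the bookkeeping to confirm that every fusible edge in the sense of Definition \ref{def:edgereduceddbg} lies on exactly one of the maximal paths $p_i$, and conversely that no entry outside the prefix intervals is touched by tunneling; both are immediate from the maximality of the $p_i$ and from Definition \ref{def:tunneling}.
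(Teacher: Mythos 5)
Your proposal follows the paper's proof exactly: start both counts at $|S|$, match the per-path edge savings against the per-prefix-interval $\LCol$-entry removals, and invoke disjointness (Theorem~\ref{thm:dbg-fps-bwt-pis}~(2) and maximality of the paths) to sum the contributions. You are in fact more careful than the published proof, which writes the removed quantity as $(w-1)(j-i+1)$ on both sides; that is the \emph{total} number of edges (resp.\ $\LCol$-entries) carried by the path (resp.\ prefix interval), not the number removed, and your $(w-1)(j-i)$ is the correct count, obtained by reducing each multiplicity $j-i+1$ to one, and by ranging $y$ over $i<y\le j$ and $x$ over $0\le x<w-1$ in Definition~\ref{def:tunneling}. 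Since the paper's slip appears symmetrically on both sides, its conclusion still holds, but your arithmetic is the right one.
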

\begin{proof}
Initially, the number of edges as well as the size of the BWT is $|S|$.
Edge-reducing a path $p=(x_1,\ldots,x_w)$ with corresponding prefix interval $\langle w-1,[i,j]\rangle$
removes exactly $(w-1)\cdot(j-i+1)$ edges from $\tilde G_k(S)$. Analogously, tunneling the prefix interval
$\langle w-1,[i,j]\rangle$ removes exactly $(w-1)\cdot (j - i + 1)$ entries from $\LCol$, $\dout$ and $\din$.
\end{proof}

Summing up, by combining the edge minimization problem with tunneling, we are able to produce a tunneled
BWT which has minimum length under the restriction that only $k$-mer prefix intervals can be tunneled.

%%%% COMPUTING A TUNNELED BWT FROM NODE BOUNDS %%%%%%%%%%%%%%%%%%%%%%%%%%%%%%%%

\subsection*{Producing a tunneled BWT using the output of the edge minimization algorithm}

After establishing the connection between tunneling and edge minimization in de Bruijn graphs, we
want to describe how Algorithm \ref{alg:edgemin-fm} can be used to produce a tunneled BWT.
It shows that the node boundaries of the $k$-mers of the optimal order $k$ stored in the bitvector $B$
(Section ``Retaining the node boundaries of solution'') suffice to produce the corresponding tunneled BWT.

Looking at Definition \ref{def:tunneling} with the tunneling process, tunneling may be formulated
alternatively as follows: Starting with two one-initialized bitvectors $\din$ and $\dout$,
set the start entries and all intermediate entries of each prefix interval in the bitvector $\din$ to zero,
except for the uppermost row of the prefix interval. Analogously, set the end entries and all intermediate entries
of each prefix interval in $\dout$ to zero, except for the uppermost row of the prefix interval.
Afterwards, remove all entries $\LCol[i]$ and $\dout[i]$ where $\din[i] = 0$ holds,
as well as all $\din[i]$-entries where $\dout[i] = 0$ holds.

It is easy to show that the upper process produces the same tunneled BWT as the normal process from
Definition \ref{def:tunneling}. Also, the process does not regulate the order in which the columns
$[i,j], [\LF[i],\LF[j]], \ldots, [\LF^w[i],\LF^w[j]]$ of a $\langle w,[i,j]\rangle$ prefix interval have
to be marked.

Our approach thus is as follows: Given an interval $[i,j]$ of a $k$-mer $y$ in the BWT (those intervals can be obtained by the
node boundary bitvector), we check if
\begin{enumerate}
\item	$\LCol[i] = \dotsm = \LCol[j]$. In this case, the $k$-mer $y$ has only one predecessor.
\item	$[\LF[i],\LF[j]]$ are the boundaries of another $k$-mer $x$ (this can be checked using the node boundaries).
	In this case, $y$ is the only successor of $x$.
\end{enumerate}
If both conditions are true, we know that edges between $x$ and $y$ can be fused. Combined with the
tunneling process described above, we set $\din[i+1] = \dotsm = \din[j] = 0$ and
$\dout[\LF[i+1]] = \dotsm = \dout[\LF[j]] = 0$. Then, after processing each $k$-mer in this way,
in a front-to-back scan, we remove all entries $\LCol[i]$ and $\dout[i]$ where $\din[i] = 0$ holds,
as well as all $\din[i]$-entries where $\dout[i] = 0$ holds. This clearly yields the requested tunneled BWT
with minimum length. Algorithm \ref{alg:tbwt_from_nodebounds} gathers the described ideas,
and also contains an optimization which allows to build $\din$ and $\dout$ from copies of
the node boundary vector $B$.

\newpage

\begin{algorithm}[H]
\scriptsize
\KwData{FM-index of a string $S$ with alphabet $\Sigma$, BWT $\LCol$ of $S$, $\C$-array of $S$,
	bitvector $B$ of size $|S|+1$ containing the left node bounds of k-mer intervals in the BWT of $S$.}
\KwResult{Tunneled BWT $\LCol$, $\dout$ and $\din$, where prefix intervals corresponding to
	fusible paths in the underlying DBG are tunneled.}
\BlankLine
\BlankLine
\DontPrintSemicolon
rename bitvector $B$ to $\dout$ \;
$\din \gets \text{ copy of } \dout$ \;
$i \gets 1$ \ \tcp*[r]{start of current prefix interval}
\For{$j \gets 1 $ \KwTo $ n$}{
	\If(\tcp*[f]{end of k-mer interval reached}){$\din[j+1] = 1$}{
		$M \gets \getintervals( i, j )$ \;
		\BlankLine
		\tcp*[l]{check if corresponding k-mer has only one predecessor ($\LCol[i] = \dotsm = \LCol[j]$)}
		\tcp*[l]{and if predecessor has only one successor ($\dout[\LF[i]] = \dout[\LF[j]+1] = 1$)}
		\BlankLine
		\If{$M \text{ contains only one element } (c,[lb,rb]) \text{ and } \dout[lb] = 1 \text{ and } \dout[rb + 1] = 1$}{
			\BlankLine
			$\din[i\sdots j] \gets \mathtt{1} \mathtt{0}^{j-i}$ \;
			$\dout[lb\sdots rb] \gets \mathtt{1} \mathtt{0}^{j-i}$
		}
		\Else {
			$\din[i\sdots j] \gets \mathtt{1}^{j-i+1}$ \tcp*[r]{no tunnel start, mark bits in a normal manner}
			\ForEach{$\langle c,[lb,rb]\rangle \in M$}{
				$\dout[lb\sdots rb] \gets \mathtt{1}^{rb-lb+1}$
			}
		}
		$i \gets j+1$
	}
}
\BlankLine
$p \gets 1$ \;
$q \gets 1$ \;
\For(\tcp*[f]{remove redundant entries from $\LCol$, $\dout$ and $\din$}){$i \gets 1 $ \KwTo $ n$}{
	\If{$\din[i] = 1$}{
		\BlankLine
		$\LCol[p] \gets \LCol[i]$ \;
		$\dout[p] \gets \dout[i]$ \;
		$p \gets p + 1$
	}
	\If{$\dout[i] = 1$}{
		\BlankLine
		$\din[q] \gets \din[i]$ \;
		$q \gets q+1$
	}
}
\parbox{8em}{$\dout[p+1] \gets 1$} ~ $\din[q+1] \gets 1$ \tcp*[r]{set termination bits}
trim $\LCol$ to size $p$, and trim $\dout$ and $\din$ to size $p+1$

\caption{Construction of a tunneled BWT using the node boundary bitvector $B$ described in the Section
	``Retaining the node boundaries of solution''. The algorithm requires $O(n + m^* \log \sigma)$
	time using the FM-index in form of a wavelet tree, where $n$ is the length of the normal BWT, $m^*$ is the number
	of edges in the edge-reduced de Bruijn graph and $\sigma$ is the alphabet size.}
\label{alg:tbwt_from_nodebounds}
\end{algorithm}

%%%% EXPERIMENTAL RESULTS %%%%%%%%%%%%%%%%%%%%%%%%%%%%%%%%%%%%%%%%%%%%%%%%%%%%%

\Section{Experimental results}
We implemented the described algorithms using \texttt{C++} and the \texttt{sdsl-lite} library
\cite{sdsl-lite-library}. All experiments were conducted on a 64 bit Ubuntu 18.04.3 LTS system equipped
with two 16-core Intel Xeon E5-2698v3 processors and 256 GB of RAM.
All programs were compiled with the O3 option using g++ (version 7.4.0) and are publicly available
\cite{edgemin-impl}. The test data comes from the Pizza \& Chili corpus \cite{pizzachili-corpus}
and the Repetitive corpus \cite{repetitive-corpus}, where nullbytes were removed from the input files
for technical reasons.

Figure \ref{fig:exp-plot} shows our main results belonging to de Bruijn graph edge minimization and
tunneled FM index construction.%
\footnote{We refer to an FM-index as a BWT encoded in a wavelet tree.}
Not too surprising, the edge reduction ratio for non-repetitive files is moderate (5 to 45 \%) but
quite high for repetitive files {(60 - 95 \%)}. Also, the reduction of the data structure size between a
normal and a tunneled FM-index built using the edge minimization algorithm cannot completely keep
up with the edge reduction ratio. This is owed to the two additional bitvectors $\dout$ and $\din$
required to perform a backward step in a tunneled BWT, and also explains why the tunneled FM index for
the test file \texttt{dna} is bigger than the conventional FM-index: DNA consists of a 4 character alphabet.
As a consequence, encoding DNA in a balanced wavelet tree requires roughly 2 bits per symbol.
As a tunneled FM-index requires 2 additional bits per remaining symbol (bitvector $\dout$ and $\din$),
and the edge reduction ratio for \texttt{dna} is low, the additional bitvectors size exceeds the benefits of
reducing the BWT length.

The edge minimization algorithm typically requires half of the time of a full FM-index construction
(suffix array + BWT + wavelet tree), with some variations depending on the number of edges in the edge-minimal DBG of the
underlying data. The algorithm, however, is clearly faster than the naive computational approach, which for the
\texttt{sources} file required more than 8 hours while the fast algorithm required only 43 seconds--we
therefore took a pass on more comparisons with the naive approach. Also, it is surprising that for most files
except for some very repetitive inputs the order of the minimal de Bruijn graph is very small, i.e.\
a number below 30, but we are not aware of a reason for this.

\begin{figure}[t]
\pgfplotstableread{rcrbenchmark.dat}{\bmresults}

\pgfplotstableset{
	create on use/min-edge-percentage/.style={
		create col/expr={\thisrow{min_dbg_edges}*100/\thisrow{input_length}}
	},
	create on use/min-fm-percentage/.style={
		create col/expr={\thisrow{tfm_index_size}*100/\thisrow{fm_index_size}}
	},
	create on use/min-fm-overhead/.style={
		create col/expr={(\thisrow{tfm_index_size}*100/\thisrow{fm_index_size})-(\thisrow{min_dbg_edges}*100/\thisrow{input_length})}
	}
}

\begin{tikzpicture}
\begin{axis}[
	width=.9\textwidth,
	height=40ex,
	grid style=dashed,
	bar width=15pt,
	yticklabel style={},
	legend pos={north east},
	legend style={cells={align=right},legend columns=1},
	reverse legend,
	xticklabels from table={\bmresults}{file},
	xtick=data,
	x tick label style={rotate=90,anchor=east,font=\tiny},
	ybar stacked,
	ylabel={\parbox{11.5em}{\% of original DBG edges respectively FM-index size}}, 
	ytick={0,20,40,60,80,100,120,140}, 
	ymin=-5,
	ymax=150, 
	ymajorgrids=true,   
	extra y ticks=100,
	extra y tick labels={},
	extra y tick style={grid style={black,dashed,very thick}},
    ]   
\addplot+[ybar,draw=blue!50,fill=blue!15] table [y=min-edge-percentage, x expr=\coordindex] {\bmresults};
\addplot+[ybar,draw=teal!50,fill=teal!15] table [y=min-fm-overhead, x expr=\coordindex] {\bmresults};
\legend{\parbox{12em}{edge count in minimal DBG},\parbox{12em}{tunneled FM-index size}}
\end{axis}
\end{tikzpicture}
\vspace{-.5\baselineskip}
\caption{Bar chart for the edge reduction rate of de Bruijn graphs respectively FM-index size reduction using
	tunneling. The y-tick at 100 \% indicates the size of a non-edge-reduced de Bruijn graph resp.\ a
	normal FM-index. The order $k$ that minimizes the edge-reduced de Bruijn graph is depicted right beside
	of each test case name.}
\label{fig:exp-plot}
\end{figure}
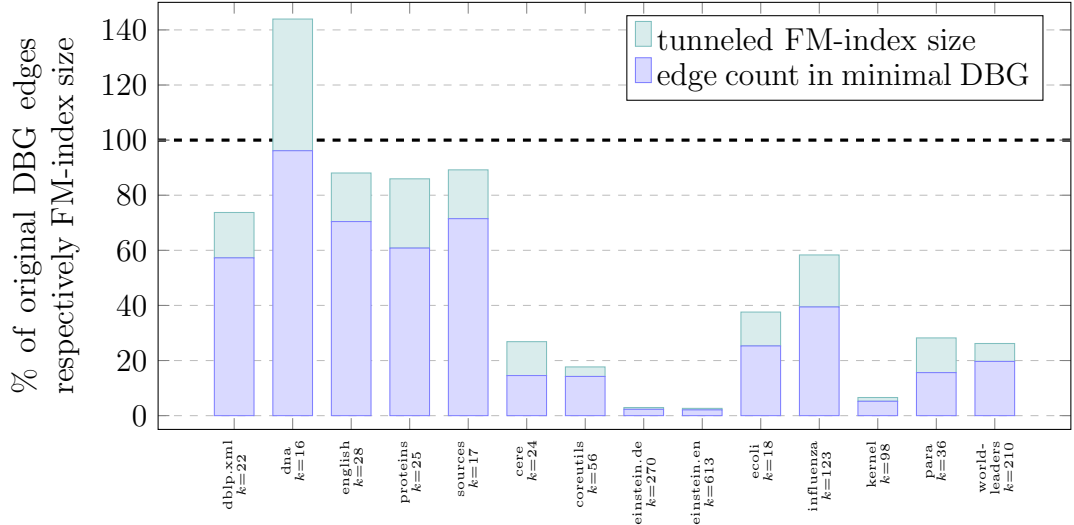

%%%% CONCLUSIONS %%%%%%%%%%%%%%%%%%%%%%%%%%%%%%%%%%%%%%%%%%%%%%%%%%%%%%%%%%%%%%

\Section{Conclusion}

We introduced the edge minimization problem and gave an efficient
algorithm to it. The usage of an FM-index improves the total run-time
of $O(n^2)$ for solving the edge minimization problem to $O(n \log \sigma)$ because of
the faster construction of the FM-index compared to the construction of the rotations trie. 
Furthermore, we showed a deep connection between edge
minimization and the ``tunneling'' compression technique. Experiments show that the algorithm
is practical and leads to a significant reduction of edges compared to a non-edge-reduced de Bruijn
graph. Furthermore, despite the required extra information and excluding one test case,
the resulting tunneled FM-index remains smaller than a classical FM-index--especially for
very repetitive inputs, the reduction of the data structure size is about 80 \%.

Therefore, our results are a significant progress on the open problem of finding disjoint prefix intervals in a BWT
\cite{ALA:GAG:NAV:BEN:2019}. They are so in a practical way, but they are also interesting from a theoretical
point of view, although we restrict prefix intervals to the special class of $k$-mer prefix intervals.
Edge-minimal de Bruijn graphs are interesting in their own right because they
constitute graphs with minimum redundancy and may be applicable in other disciplines.

\Section{Acknowledgments}
This work was supported by the DFG (OH 53/7-1).

\Section{References}
\bibliographystyle{IEEEbib}
\bibliography{refs}

\end{document}